\def\section{\@startsection{section}{1}%
	\z@{.7\linespacing\@plus\linespacing}{.5\linespacing}%
	{\bfseries
		\centering
}}
\def\@secnumfont{\bfseries}
\newtheorem{theorem}{Theorem}[section]
\newtheorem{corollary}[theorem]{Corollary}
\newtheorem{definition}[theorem]{Definition}
\newtheorem{proposition}[theorem]{Proposition}
\newtheorem{remark}[theorem]{Remark}
\numberwithin{equation}{section}
\colorlet{blu1}{blue!70!black}
\colorlet{blu2}{blue!50!black}
\colorlet{blu3}{blue!70!red}
\colorlet{blu4}{blue!60!green}
\colorlet{red1}{red!80}
\colorlet{red2}{red!50!black}
\colorlet{red3}{red!70!yellow}
\colorlet{red4}{red!50!yellow}
\colorlet{yel1}{yellow!50!black}
\colorlet{yel3}{yellow!20!blue}
\colorlet{gre1}{green!60!blue}
\colorlet{gre2}{green!60!black}
\colorlet{gre3}{green!40!black}
\begin{document}
	
\begin{center}
{\bf\Large A new interacting Fock space, the Quon algebra} 
	
{\bf\Large with operator parameter and its Wick's theorem}\vspace{1.5cm}

{\bf\large Yungang Lu}\vspace{0.6cm}
	
{Department of Mathematics, University of Bari ``Aldo Moro''}\vspace{0.2cm}
	
{Via E. Orabuna, 4, 70125, Bari, Italy}	
\end{center}\vspace{1.5cm}

\centerline{\bf\large Abstract}\bigskip\noindent
Motivated by the creation-annihilation operators in a newly defined interacting Fock space, we initiate the introduction and the study of the Quon algebra. This algebra serves as an extension of the conventional quon algebra, where the traditional constant parameter $q$ found in the $q$--commutation relation is replaced by a specific operator. Importantly, our investigation aims to establish Wick's theorem in the Quon algebra, offering valuable insights into its properties and applications.

\bigskip\bigskip\bigskip\noindent
{\it Keywords}: Quon algebra; Wick's theorem; Commutation relation; $(q,m)$--Fock space.

\bigskip\noindent
{\it AMS Subject Classification 2020}: 05A10, 11B83, 47A06, 47A57

\bigskip\bigskip

\section{Introduction}\label{qm-s1}  \bigskip

Let ${\mathcal H}$ be a (pre--)Hilbert space, and consider the quon algebra over ${\mathcal H}$ with the parameter $q\in[-1,1]$, denoted as $G_{q}(\mathcal{H})$ (see \cite{Gre91} and references therein). This algebra is defined as the $*$--algebra generated by the family $\{b_q^+(f), b_q(g): f, g\in\mathcal{H}\}$, where these $b_q^+(f)$'s and $b_q(g)$'s satisfy the standard {\it involution relation}:
\begin{equation}\label{inv-rela01}
\big(b_q(f)\big)^* = b_q^+(f) ,\quad \forall\, f\in {\mathcal H}
\end{equation}
and the {\it $q$--commutation relation}:
\begin{equation}\label{q-com-rela01}
b_q(f)b_q^+(g) - b_q^+(g)b_q(f)q = \langle f, g\rangle ,\quad \forall\, f, g\in {\mathcal H}
\end{equation}
Moreover, they can be interpreted as the creation and annihilation operators, respectively, acting on the $q$--Fock space $\Gamma_q(\mathcal{H})$ (see \cite{Bo-Kum-Spe97}, \cite{Bo-Spe91}, and references therein).

As pointed out in \cite{Gre2001}, \cite{GreHil99} and related references, the $q$--commutation relation, which is fundamental to the definition of the quon algebra, originated as a convex combination of the Canonical Commutation Relation (CCR):
\begin{equation}\label{ccr01}a(f)a^+(g)-a^+(g)a(f)=\langle f, g\rangle \,,\quad \forall\, f,g\in {\mathcal H}
\end{equation}
and the Canonical Anti--commutation Relation (CAR):
\begin{equation}\label{car01}
a(f)a^+(g)+a^+(g)a(f)=\langle f, g\rangle \,,\quad \forall\, f,g\in {\mathcal H}
\end{equation}
In fact, by combining \eqref{ccr01} with a weight of $(1+q)/2$ and \eqref{car01} with a weight of $(1-q)/2$, one finds:
\begin{equation}\label{ccr+car01}
a(f)a^+(g)-a^+(g)a(f)q=\langle f, g\rangle \,,\quad \forall\ f,g\in {\mathcal H}
\end{equation}
It is important to point out that, to the best of the author's knowledge, the first {\it systematic derivation} of the $q$--commutation relation is attributed to \cite{AcKoVo98b}.

Several variations of the $q$--commutation relation have been introduced (see \cite{AsaiYosh2020}, \cite{Blitvic2012}, \cite{BozeYosh2006}, \cite{BozeWyso2001}, \cite{Rand2019}, and references therein). In this paper, actuated by a specific example, we introduce and investigate the Quon algebra ${\mathcal Q}_{q,m}$ as a generalization of the quon algebra, in which the aforementioned convex combination appears with {\it operator coefficients}. 

Let ${\bf q}$ be a self--adjoint operator with a spectrum contained in $[-1,1]$. We get obviously that

$\bullet$ both ${1+ {\bf q}\over 2}$ and ${1- {\bf q}\over 2}$ are non--negatively defined and  bounded by the identity;

$\bullet$ ${1+ {\bf q}\over 2}+{1- {\bf q}\over 2}=$the identity. \\
This allows us to introduce the following commutation relation as a generalization of the standard $q$--commutation relation \eqref{ccr+car01}: 
\begin{equation}\label{gene-q-com-rela01}
a(f)a^+(g)-a^+(g)a(f){\bf q}=\langle f, g\rangle \,,\quad \forall\ f,g\in {\mathcal H}
\end{equation}

Due to ${\bf q}$ being an operator (not necessarily a scalar), the algebraic structure of the Quon algebra ${\mathcal Q}_{q,m}$ depends not only on the commutation relation \eqref{gene-q-com-rela01} and the standard involution relation (an analogue of \eqref{inv-rela01}, which states that ${\bf q}^*={\bf q}$ and $\big(a(f)\big)^* = a^+(f)$ for any $f\in {\mathcal H}$), but also on a new commutation relation which specifies how to {\it commune} $a(f)$ and ${\bf q}$ (or equivalently, $a^+(f)$ and ${\bf q}$). For detail, see Definition \ref{QuanAlg}.

In this paper, we begin by introducing the Quon algebra ${\mathcal Q}_{q,m}$. Then, in the main part of Section \ref{qm-s2}, we present a concrete example of the Quon algebra using a specified interacting Fock space over a given (pre--)Hilbert space. Section \ref{qm-s3} is dedicated to the establishment of Wick's theorem for the Quon algebra ${\mathcal Q}_{q,m}$, which enables us to explicitly compute the normally ordered form of any general {\it word} of ${\mathcal Q}_{q,m}$. The meaning of the term {\it word} in the context of the Quon algebra ${\mathcal Q}_{q,m}$ will be elucidated in the opening paragraph of the same section (Section \ref{qm-s3}). The main results are Theorem \ref{QuanAlg06} and Corollary \ref{QuanAlg06a9}. In the last section, we demonstrate that when ${\bf q}$ is a scalar within the interval $[-1,1]$, our Wick's theorem given in Theorem \ref{QuanAlg06} and Corollary \ref{QuanAlg06a9} is equivalent to the usual $q$--Wick's theorem, despite differences in the formulation of the $q$--weight as compared to \cite{ManSchSev2007} and related references.

\bigskip\bigskip

\section{Interacting Fock space, the $(q,m)$--Fock space and the Quon algebra}\label{qm-s2}\bigskip

First of all, we'll recall the {\it algebraic} definition of Fock space and creation operator. 

\begin{definition}\label{alg-Fock}Let $\mathcal{H}$ be a vector space. The vector space $\Gamma_0(\mathcal{H}):= \bigoplus_{n=0}\mathcal{H}^{\otimes n}$ is referred to as the {\bf algebraic Fock space} over $\mathcal{H}$, here, $\mathcal{H}^{\otimes0}:=\mathbb{C}$, $\oplus$ and $\otimes$ are interpreted as {\bf algebraic} operations. Furthermore, 

$\bullet$ the vector $\Phi:=1\oplus0\oplus0\oplus \mathcal{\ldots}$ is called the {\bf vacuum} vector of $\Gamma_0(\mathcal{H})$; 

$\bullet$ each $\mathcal{H}^{\otimes n}$ is termed the {\it $n$--particle space}.

For any $f\in \mathcal{H}$, one defines the creation operator (with the test function $f$) $A^+(f)$ as a {\bf linear} operator on the vector space $\Gamma_0(\mathcal{H})$ with the following properties:

$\bullet$ $A^+(f)\Phi:=f$;

$\bullet$ $A^+(f)(g_1\otimes\ldots\otimes g_n):=f\otimes g_1\otimes\ldots\otimes g_n$ for any $n\in\mathbb{N}^*$ and $\{g_1,\ldots, g_n\}\subset\mathcal{H}$.
\end{definition}

Secondly, let's briefly review interacting Fock space  over a given (pre--)Hilbert space $\mathcal{H}$ with scalar product $\langle \cdot,\cdot\rangle $. For any $n\in{\mathbb N}^*$, we consider a linear non--negatively defined operator $\lambda_{n}$ on $\mathcal{H}^{\otimes n}$. To simplify technical treatment, we assume additionally that $\lambda_{n}$'s are bounded, i.e., $\lambda_n \in\mathbf{B}_+\left( \mathcal{H} ^{\otimes n}\right)$ for any $n\in {\mathbb N}^*$.

We introduce the sesquilinear forms $\{\langle\cdot, \cdot \rangle_n\}_{n=1}^\infty$ as follows: 
\begin{equation}\label{sca-pro}
\langle x, y \rangle_n:= \langle x, \lambda_{n}y \rangle_{\otimes n}, \quad \forall n\in \mathbb{N}^*\text{ and }x,y\in \mathcal{H}^{\otimes n}
\end{equation}
Here and in the following discussion, $\langle\cdot, \cdot \rangle_{\otimes n}$ represents the standard $n$--fold tensor scalar product based on $\langle\cdot, \cdot \rangle$. We say that $\{\langle\cdot, \cdot \rangle_n\}_{n=1}^\infty$ is {\bf consistent} (equivalently, $\left\{\lambda_n\right\} _{n=1}^\infty$ is {\bf consistent}) if, for any $n\in\mathbb{N}^*$ and $f\in\mathcal{H}$, $\langle f\otimes x, f\otimes x \rangle_{n+1}=0$ whenever $\langle x, x \rangle_{n}=0$. Additionally, we introduce for any $n\in\mathbb{N}^*$, the following: 

$\bullet$ the relation $\sim:$ $x\sim y\ \overset {\text{def.}} {\Longleftrightarrow}\left\langle x-y,x-y\right\rangle _{n}=(x-y,\lambda_{n}\left(  x-y\right) )_ {\otimes n} =0$;
	
$\bullet$ the pre--Hilbert space $\mathcal{H}_{n}:= \left\{ \mathcal{H} ^{\otimes n}/\sim,\left\langle \cdot,\cdot\right\rangle_{n}\right\}  .$

It's worth noting that if $\{\langle\cdot, \cdot \rangle_n\}_{n=1}^\infty$ is consistent, then for any $f\in \mathcal{H}$, $A^+(f)x=0$ whenever $x=0$. Thus, $A^+(f):\mathcal{H}_n\longmapsto\mathcal{H}_{n+1}$ remains linear for any $n\in\mathbb{N}$. 

Now, based on the above definitions, for any  $f\in\mathcal{H}$, the creation operator with the test function $f\in\mathcal{H}$, i.e. $A^{+}\left( f\right)$, has the following actions:

$\bullet$ For $n=0$, it transforms $c\in\mathcal{H}_0:={\mathbb C}$ to $cf$.

$\bullet$ For $n\ge1$, it maps the equivalent class $[x]\in \mathcal{H}^{\otimes n}/\sim$ to the equivalent class $\left[ f\otimes x\right] \in\mathcal{H}^{\otimes( n+1) }/\sim$.\\ 
From this point forward, unless confusion arises, we'll denote the equivalent class $[ x] $ simply as $x$.

\begin{definition}\label{de-consistence} Let $\mathcal{H}$ be a pre--Hilbert space and $\{  \lambda_n\}_{n=1}^\infty$ be a family of operators, where $\lambda_n\in\mathbf{B}_+ \big(  \mathcal {H}^{\otimes n}\big)$ for any $n\in{\mathbb N}^*$. In the case where $\{\lambda_n\}_{n=1}^\infty$ is consistent, we refer to $\Gamma\left( \mathcal{H}, \left\{\lambda_{n} \right\}_{n}\right)$ $:= \mathcal{H}_{0}\oplus \mathcal {H}_{1}\oplus\mathcal{H}_{2}\oplus\mathcal{\ldots}$,  as the {\bf interacting Fock space} (and denoted as {\bf IFS} from now on for brevity) over $\mathcal{H}$ with the interactions $\lambda_n$'s. Hereinafter, $\mathcal{H}_{0} :=\mathbb{C}$ and the vector $\Phi:=1\oplus0\oplus0\oplus \mathcal{\ldots}$ is called the {\bf vacuum} vector of the IFS. Furthermore, for any $f\in \mathcal{H}$, if $A^+(f) \big\vert_{\mathcal{H}_n}$ is adjointable for any $n\in\mathbb{N}$, then the linear operator 
\begin{equation}\label{def_anni}
A(f):=0\oplus \bigoplus_{n=1}^\infty\big(A^+(f) \big)^*\big \vert_{\mathcal{H}_n}  
\end{equation}
is called the {\bf annihilation operator} with the test function $f$.
\end{definition}

Some well--known useful facts about IFS and creation--annihilation operators could be formulated as follows:

1) For any $f\in\mathcal{H}$ and $n\in\mathbb{N}$, $\mathcal{H}_n\overset{A^{+}( f)  } {\longmapsto} \mathcal{H}_{n+1}\overset{A(f) } {\longmapsto} \mathcal{H}_n,\ \mathcal{H}_0\left(  :=\mathbb{C} \right) \overset{A( f) } {\longmapsto}0$.

2) For any  $n\in\mathbb{N}^*$ and  $\{f_1,\ldots,f_n\}\subset\mathcal{H}$, $A^+( f_1)\ldots A^+( f_n) \Phi=f_1\otimes \ldots\otimes f_n$ (in fact, $\left[  f_1\otimes\ldots\otimes f_n\right]$, i.e.,  $\left[  A^+(f_1)  \ldots A^+(  f_n)  \Phi\right] $).
	
3) By denoting, as usual,
\begin{align}\label{QuanAlg02}
A^{\varepsilon}(f):=\begin{cases}A^{+}(f), & \text{if }\varepsilon=1\\
A(f),& \text{if }\varepsilon=-1\end{cases}\ ,\qquad \forall f\in\mathcal{H}
\end{align}
then,

$\bullet$ for any $N\in\mathbb{N}^*$ and $\varepsilon\in\{ -1, 1\}^N$, which is the set of all the functions defined on $\{  1,\ldots,N\}$ and taking values in $\{-1,1\}$, the vacuum expectation $\left\langle \Phi, A^ {\varepsilon(1)}(f_1)\ldots A^{\varepsilon(N)}(f_N) \Phi\right\rangle $ differs from $0$ only if $N=2n$ and $\varepsilon\in \{-1,1\}_+^{2n}$, where and throughout this paper,
\begin{align*}
\{-1,1\}_+^{2n}:=\Big\{\varepsilon\in\{-1,1\}^{2n}: \sum_{h=1}^{2n}\varepsilon(  h)  =0,\ \  \sum_{h=k}^{2n}\varepsilon(  h) \ge0\text{ for any } k\in\{1,\ldots,2n\} \Big\}
\end{align*}
	
$\bullet$ for any $\varepsilon\in\{ -1,1\} _+^{2n}$ and $k\in\{1,\ldots,2n\}$ such that $\sum_{h=k+1}^{2n}\varepsilon(h)=0$ (so $k$ must be even),
\begin{align*}
&\left\langle \Phi,A^{\varepsilon(1)}( f_1)\ldots A^{\varepsilon( 2n)}( f_{2n}) \Phi\right\rangle \\
=&\left\langle\Phi,A^{\varepsilon(1)}(f_1)\ldots A^{\varepsilon(k)}(f_k)\Phi\right\rangle\left\langle\Phi, A^{\varepsilon(k+1)} (f_{k+1})\ldots A^{\varepsilon(2n)}(f_{2n})  \Phi\right\rangle
\end{align*}

The above property 3) is usually called the {\bf generalized Gaussianity}.
	
Let, for any $n\in\mathbb{N}^*$, 
\[
PP(2n):=\left\{  \text{pair partitions of the set }\{  1,\ldots,2n\}\right\}
\]
\[
NCPP(2n):=\left\{  \text{non--crossing pair partitions of the set } \{1,\ldots,2n\}  \right\}
\]
Here and throughout this paper, we adopt the following conventions without loss of generality: for any $\{(l_{h},r_{h})\} _{h=1}^{n}\in PP(2n)$,
\[
l_1<l_2\,<\ldots<l_n \ \text{ and } \ \ \ l_h<r_h,\quad \forall h\in\{1,\ldots,n\}
\]
This implies that the index $l_1$ is equal to 1, and $2n$ belongs to the set $\big\{r_k:k\in\{1,\ldots,n\} \big\}$.

It is a well--established fact (see \cite{aclu-qed}, \cite{acluvo-kyoto}) that there exists a bijective relationship between the set of non--crossing pair partitions $NCPP(2n)$ and the set $\{ -1,1\}^{2n}_+$. More precisely, for any $\varepsilon\in\{-1,1\}_+^{2n} $, there exists a unique $\{(l_h,r_h)\}_{h=1}^n\in NCPP(2n)$ such that $\varepsilon(l_h) =-1$ and $\varepsilon( r_h) =1$ for all $h\in\{1,\ldots,n\}$. At times, this element in $NCPP(2n)$ is denoted as
$\{(l^\varepsilon_h,r^\varepsilon_h)\}_{h=1}^n$ to emphasize its dependence on $\varepsilon$.

\begin{remark} Let $\mathcal{H}$ be a (pre--)Hilbert space. It is worth noting that all the concrete and previously examined Fock spaces over $\mathcal{H}$ can be constructed as specific instances of an IFS with appropriately chosen interactions. Let's illustrate this with an example by defining interactions as follows:
\[
\lambda_n:=\frac{1}{n!}\sum_{\sigma\in\mathfrak{S}_{n}}P_ {\sigma},\qquad \forall n\geq2
\]
where and from this point onward, $\mathfrak{S}_n$ represents the permutation group of order $n$, and for any $\sigma\in\mathfrak{S}_n$, $P_{\sigma}$ is the {\it symmetrization operator}:
\[
P_{\sigma}\left(f_n\otimes\ldots\otimes f_1\right) :=f_{\sigma(n)}\otimes\ldots\otimes f_{\sigma(1)  },\qquad \forall\,\{f_h\}_{h=1}^n\subset\mathcal{H}
\]
Upon an easy examination, it becomes evident that:

$\bullet$ for any $n\ge2$, the space $\mathcal{H}_{n}$ is essentially equivalent to $\mathcal{H}^{\circ n}$, representing the symmetric subspace of $\mathcal{H}^{\otimes n}$;
	
$\bullet$ for any $f\in\mathcal{H}$, the operator $A^+( f)$ corresponds to the Boson creation operator with the test function $f$.
\end{remark}

When discussing a Fock space over the provided (pre--)Hilbert space $\mathcal{H}$, we use the term ``$\{ \mathcal{P}_n\}_n$--{\bf determined}'' (for general formulation, see \cite{accardi90} and \cite{accardi95}) when the following conditions hold: For any $n\in \mathbb{N}^*$ and $\{(l_h,r_h)\}_{h=1}^n\in\mathcal{P} _n$, there exists a functional $q_{\{(l_h,r_h)\}_{h=1} ^n}:\mathcal{H}^{\otimes2n} \longmapsto\mathbb{C}$ such that:
	
$\bullet$ for any $h\in\{ 1,\ldots,n\}$, $q_{\{ (l_h, r_h)\}_{h=1}^n}$ is linear with respect to the $r_h$--th variable and anti--linear with respect to the $l_h$--th variable;
	
$\bullet$ there exist $f_1,\ldots,f_{2n}\in\mathcal{H}$ such that $q_{\{(l_h,r_h)\}_{h=1}^n}(f_1,\ldots,f_{2n}) \ne 0$, in other words, $q_{\{(l_h,r_h)\}_{h=1}^n}$ is a non--zero functional;
	
$\bullet$ the following equality holds for all $\{ f_h:h\in\{1,\ldots,2n\}\} \subset\mathcal{H}$:
\begin{align}\label{QuanAlg02g}
&\langle\Phi,\big(A(f_1) +A^+(f_1)\big)\cdot\ldots\cdot \big(A(f_{2n}) +A^+(f_{2n})\big)\Phi \rangle\notag\\ =&\sum_{\{(l_h, r_h)\}_{h=1}^n \in\mathcal{P}_n} q_{\{(l_h,r_h)\}_{h=1}^n}( f_{1},\ldots, f_{2n})
\end{align}

It is important to note that all the specific Fock spaces investigated so far share the property of being $\{\mathcal{P}_{n}\}_n$--determined, with $\{\mathcal{P}_n\} _n$ falling into {\it one of the following two categories}:

$\bullet$ $\{PP(2n)\}_n$: This is the case for both Bosonic and Fermionic Fock spaces, where $ \mathcal{P}_n$ is precisely the set $PP(2n)$ for all $n$. In general, the $q$--Fock space (as discussed in \cite{Bo-Spe96}, \cite{Bo-Kum-Spe97}, \cite{Fris-Bou70}, \cite{Gre91}, and related references) with $q\ne0$ also falls into this category.

$\bullet$ A subset of $NCPP(2n)$: For various ``free-type'' IFS introduced in \cite{lu95}, which includes examples like the free Fock space, the Boolean Fock space, the monotone Fock space, the chromatic Fock space, and 1--mode type IFS, they are $\{ \mathcal{P}_n\}_n$--determined, where $\mathcal{P}_n$ is a subset of the set $NCPP(2n)$ for any $n$. More concretely:

---for the free (as well as the monotone, the chromatic and 1--mode type) Fock space, $\mathcal{P}_n=NCPP(2n)$;

---for the Boolean Fock space, $\mathcal{P}_{n}$ consists solely of  $\{(2h-1,2h)\} _{h=1}^n$.

It would be valuable to illustrate a specific example of a Fock space that is $\{\mathcal{P}_n\}_n$--determined, while ensuring that the set $\mathcal{P}_n$ is distinct from both $PP(2n)$ and a subset of $NCPP(2n)$ for any $n\ge3$. In other words, we aim to construct a specific $\{\mathcal{P}_n\}_n$--determined Fock space that satisfies the following inequalities for any $n\ge3$,
\begin{equation}\label{QuanAlg02h}
NCPP(2n)\subsetneqq\mathcal{P}_n\subsetneqq PP(2n) 
\end{equation}
It's essential to emphasize that this condition applies only for $n\ge3$, as the following observations confirm the non--existence of such $\mathcal{P}_n$ for $n\in\{1,2\}$:

$\bullet$ for $n=1$, the fact $NCPP(2)=PP(2)$ indicates that there is no intermediate $\mathcal{P}_1$ that satisfies \eqref{QuanAlg02h};

$\bullet$ for $n=2$, one observes that $\left\vert NCPP(4) \right \vert =2$ and $\left\vert PP(4)\right\vert =3$, which further underscores the absence of $\mathcal{P}_2$ such that \eqref{QuanAlg02h} holds.

Notice that if our Fock space is Gaussian type (see \cite{lu-gifs}) with the weight function $\omega: \cup_{n}PP(2n)\longmapsto\mathbb{C}$, i.e., for all $n\in{\mathbb N}^*$ and $\{f_1,\ldots, f_n\}\subset{\mathcal H}$,
\[
q_{\{(l_h,r_h)\}_{h=1}^n}(f_1,\ldots,f_{2n})  :=\omega\left(\{ (l_h,r_h)\}_{h=1}^n\right) \prod_{h=1}^n\big\langle f_{l_h},f_{r_h}\big\rangle
\]
then it must be
$\{\mathcal{P}_n\}_n$--determined by taking
\[
\mathcal{P}_n:=\big\{ \{ (l_h,r_h)\}  _{h=1}^n\in PP(2n):\omega( \{ (l_h,r_h) \} _{h=1}^n) \ne0\big\}
\]

In general, a $\{ \mathcal{P}_n\} _n$--determined Fock space is not necessarily Gaussian in nature. The monotone Fock space supplies such an example: it is $\{NCPP(2n)\} _n$-- determined but not of Gaussian type, as shown in Section $\mathcal{x}4$ of \cite{lu-gifs}.

Now, let's consider a particular IFS. Suppose $\mathcal{H}$ is a vector space, and let $m\in\mathbb{N}^*\cup {+\infty}$ and $q\in[-1,1]$. We introduce a sequence of linear operators $\{\lambda_n\}_{n=1}^\infty$, which will be referred to as {\bf $(q,m)$--interactions}, such that:

$\bullet$ $\lambda_1={\bf 1}:=$the identity operator on $\mathcal{H}$ and for any $n> m$, $\lambda_n:={\bf 1}^{\otimes (n-m)} \otimes \lambda_m$;

$\bullet$ for any $1\le n\le m$, $\lambda_n$ is the $q$--{\it symmetrization} operator on $\mathcal{H} ^{\otimes n}$, which is defined as follows:
\begin{equation}\label{symm-op}
\lambda_n(g_1\otimes \ldots\otimes g_n):= \sum_{\sigma\in \mathfrak{S}n}q^{\rho(\sigma)} g_{\sigma(1)}\otimes \ldots\otimes g_{\sigma(n)},\quad \forall\{ g_1,\ldots,g_n\} \subset\mathcal{H}
\end{equation}
hereinafter, for any $\sigma\in\mathfrak{S}_n$, $\rho(\sigma)$ is defined as the {\it number of inversions} of $\sigma$:
\begin{equation}\label{inver-nu}
\rho(\sigma):=\big\vert\{(i,j):1\le i<j\le n,\,\sigma(i)>\sigma(j)\} \big\vert
\end{equation}
As shown by various authors (e.g., \cite{Bo-Spe91}, \cite{Fiv92}, \cite{Zag92}) that $\lambda_n$ is non--negatively defined for any $n\in{\mathbb N}^*$. 

Now, we introduce the {\bf $(q,m)$--annihilation operator}. To define this operator, we require a scalar product $\langle\cdot,\cdot\rangle:\mathcal{H} \times\mathcal{H} \longrightarrow \mathbb{C}$, which endows $\mathcal{H}$ with a pre--Hilbert space structure. 

\begin{definition}\label{alg-anni}
Let $\mathcal{H}$ be a (pre--)Hilbert space equipped with the scalar product $\langle\cdot,\cdot\rangle$, and let $q\in[-1,1]$. For any $m\in\mathbb{N}^*\cup\{+\infty\}$ and $f\in\mathcal {H}$, the $(q,m)$--annihilation operator $a_{q,m}(f): \Gamma_0(\mathcal{H}) \longrightarrow\Gamma_0(\mathcal{H})$ is defined based on {\bf linearity} and further characterized as follows:
	
$\bullet$ $a_{q,m}(f)\Phi:=0$;
	
$\bullet$ for any $n\in\mathbb{N}^*$ and $\{g_1,\ldots, g_n\}\subset\mathcal{H}$,
\begin{align}\label{alg-crea01}
&a_{q,m}(f)(g_1\otimes\ldots\otimes g_n)\notag\\
:=&\begin{cases}
\sum_{k=1}^nq^{k-1}\,\langle f, g_k\rangle g_1\otimes\ldots \otimes g_{k-1}\otimes g_{k+1}\otimes\ldots\otimes g_n,&\text{ if }n< m+1;\\
\langle f, g_1\rangle g_2\otimes\ldots\otimes g_n,&\text{ if }n\ge m+1  \end{cases}
\end{align}
\end{definition}

\begin{remark} Clearly, the operator $a_{q,m}(f)$ defined above corresponds to  
	
$\bullet$ the standard $q$--annihilation if $m=+\infty$;

$\bullet$ the free annihilation operator if either $q=0$ or $m=1$.\\
Therefore, genuinely interesting case arises when $1< m<+\infty$ and $q\ne0$.
\end{remark}

In the following, the IFS $\Gamma\left( \mathcal{H},\{ \lambda_n\}_n\right)$ will be referred to as the {\bf $(q,m)$--Fock space} if $\{ \lambda_n\}_n$ is the $(q,m)$--interactions; the operators $A^+(f)$ and $a_{q,m}(f)$ introduced in Definition \ref{alg-Fock} and Definition \ref{alg-anni}, are termed the $(q,m)$--creation and $(q,m)$--annihilation operators, respectively, with the test function $f\in\mathcal{H}$.

\begin{proposition}\label{DCT05} For any $m\in\mathbb{N}^*$ and $q\in[-1,1]$, for any (pre--)Hilbert space $\mathcal{H}$, the following statements are true for all $f$ belonging to $\mathcal{H}$ when considering the $(q,m)$--Fock space $\Gamma\left( \mathcal{H},\{\lambda_n\}_n\right)$: 
	
1)  $(q,m)$--creation operator $A^+(f)$ is bounded and its norm is given by:
\begin{align}\label{DCT05g0}
\Vert A^+(f)\Vert =\Vert f\Vert\cdot\begin{cases} \sqrt{1+q+\ldots+q^{m-1}},&\text{ if }q\in[0,1];\\ 1,&\text{ if }q\in[-1,0)  \end{cases}
\end{align}
In particular, its restriction to any $n$--particles space $\mathcal{H}_n$ is bounded.

2) $A(f):=\big(A^+( f)\big)^*$ is well defined and bounded.
	
3) The $(q,m)$--annihilation operator $a_{q,m}(f)$ is equal to $A(f)$.
\end{proposition}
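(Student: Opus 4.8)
The plan is to treat the three assertions together, the combinatorial core being a single operator identity that both identifies $a_{q,m}(f)$ as the adjoint of $A^+(f)$ and drives the norm computation. Write $\langle\cdot,\cdot\rangle_{\otimes n}$ for the plain tensor inner product and let $D_f:\mathcal{H}^{\otimes(n+1)}\to\mathcal{H}^{\otimes n}$ be the \emph{free} left annihilation $D_f(y_1\otimes\cdots\otimes y_{n+1}):=\langle f,y_1\rangle\,y_2\otimes\cdots\otimes y_{n+1}$, i.e. the $\langle\cdot,\cdot\rangle_{\otimes}$--adjoint of $x\mapsto f\otimes x$. The first step is to establish, as honest maps on raw tensors,
\[
D_f\circ\lambda_{n+1}=\lambda_n\circ a_{q,m}(f),\qquad\forall\,n\in\mathbb{N}.
\]
For $n+1\le m$ both $\lambda_{n+1}$ and $\lambda_n$ are full $q$--symmetrizations; I would expand $\lambda_{n+1}(y_1\otimes\cdots\otimes y_{n+1})=\sum_{\sigma\in\mathfrak{S}_{n+1}}q^{\rho(\sigma)}y_{\sigma(1)}\otimes\cdots\otimes y_{\sigma(n+1)}$, apply $D_f$, and group the permutations by the value $j=\sigma(1)$. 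The inversions of $\sigma$ split as $(j-1)$, contributed by position $1$, plus the inversions of the permutation induced on the remaining $n$ letters; this reproduces precisely the weight $q^{\,j-1}$ and the symmetrization $\lambda_n$ in the $n+1\le m$ branch of \eqref{alg-crea01}. For $n\ge m$ one has $\lambda_{n+1}=\mathbf{1}^{\otimes(n+1-m)}\otimes\lambda_m$, so $D_f$ strikes a \emph{free} slot, peels off $y_1$, and leaves $\mathbf{1}^{\otimes(n-m)}\otimes\lambda_m=\lambda_n$ acting on $y_2\otimes\cdots\otimes y_{n+1}$, matching the free branch of \eqref{alg-crea01}. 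Since $\langle A^+(f)x,y\rangle_{n+1}=\langle f\otimes x,\lambda_{n+1}y\rangle_{\otimes(n+1)}=\langle x,D_f\lambda_{n+1}y\rangle_{\otimes n}$, the identity gives the adjoint relation $\langle A^+(f)x,y\rangle_{n+1}=\langle x,a_{q,m}(f)y\rangle_n$ on all representatives.

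\textbf{Commutation relation and norm recursion.} Put $B_n:=a_{q,m}(f)A^+(f)\big|_{\mathcal{H}_n}=\big(A^+(f)\big|_{\mathcal{H}_n}\big)^*\,A^+(f)\big|_{\mathcal{H}_n}$ and evaluate it on $f\otimes x$ directly from \eqref{alg-crea01}: the $k=1$ term always returns $\|f\|^2x$, while for $n<m$ the remaining terms reassemble into $q\,A^+(f)a_{q,m}(f)\big|_{\mathcal{H}_n}$ and for $n\ge m$ they are absent. Thus $B_n=\|f\|^2\mathbf{1}$ for $n\ge m$, and $B_n=\|f\|^2\mathbf{1}+q\,A^+(f)a_{q,m}(f)\big|_{\mathcal{H}_n}$ for $n<m$, the $(q,m)$--analogue of \eqref{q-com-rela01}. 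I would then prove by induction on $n$ that $A^+(f)\big|_{\mathcal{H}_n}$ is bounded with $N_n:=\big\|A^+(f)\big|_{\mathcal{H}_n}\big\|^2=\|B_n\|$, using the pre--Hilbert identity $\|T\|^2=\|T^*T\|$ and the fact that, with $S:=A^+(f)\big|_{\mathcal{H}_{n-1}}$, $\big\|A^+(f)a_{q,m}(f)\big|_{\mathcal{H}_n}\big\|=\|SS^*\|=\|S\|^2=N_{n-1}$. From $N_0=\|f\|^2$ this yields, for $q\in[0,1]$, $N_n=\|f\|^2(1+q+\cdots+q^n)$ for $0\le n\le m-1$ and $N_n=\|f\|^2$ for $n\ge m$; for $q\in[-1,0)$ the sign of $q$ forces $B_n\le\|f\|^2\mathbf{1}$, hence $N_n\le\|f\|^2=N_0$.

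\textbf{Assembling the three parts.} Because $A^+(f)$ raises the orthogonal grading $\Gamma=\bigoplus_n\mathcal{H}_n$ by one level, I have $\|A^+(f)\|^2=\sup_n N_n$. For $q\in[0,1]$ the sequence $N_n$ increases up to $n=m-1$ and then drops to $\|f\|^2$, so the supremum is $N_{m-1}=\|f\|^2(1+q+\cdots+q^{m-1})$; for $q\in[-1,0)$ it equals $N_0=\|f\|^2$. This is exactly \eqref{DCT05g0}, giving part~1 and, in particular, boundedness of every restriction $A^+(f)\big|_{\mathcal{H}_n}$. Since the bounded, explicitly given $a_{q,m}(f)$ satisfies the defining relation of the adjoint (first step), the operator $A(f):=\big(A^+(f)\big)^*$ is well defined and bounded, which is part~2, and $A(f)=a_{q,m}(f)$, which is part~3.

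\textbf{Main obstacle.} The decisive and most delicate point is the inversion--counting identity of the first step, especially the interface between the regimes $n+1\le m$ and $n+1>m$: one must verify that the truncation $\lambda_n=\mathbf{1}^{\otimes(n-m)}\otimes\lambda_m$ makes the leftmost slot genuinely free once $n\ge m$, so that left annihilation there becomes free and the polynomial $q$--weights switch off at exactly the right level. A secondary subtlety is that $N_n$ is \emph{not} monotone in $n$ — it climbs to level $m-1$ and then falls back to $\|f\|^2$ — so the supremum defining $\|A^+(f)\|$ is attained at $n=m-1$ when $q\ge0$ (and at $n=0$ when $q<0$) rather than in the limit $n\to\infty$.
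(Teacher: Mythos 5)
Your argument is correct, and it splits naturally into a half that coincides with the paper and a half that does not. For assertion 3), your intertwining identity $D_f\circ\lambda_{n+1}=\lambda_n\circ a_{q,m}(f)$ is exactly the paper's computation in a different packaging: the paper proves $\langle f\otimes G,\lambda_n(f_1\otimes\cdots\otimes f_n)\rangle_{\otimes n}=\langle G,a_{q,m}(f)(f_1\otimes\cdots\otimes f_n)\rangle_{n-1}$ by grouping permutations according to $\sigma(1)=k$ and using the same inversion count $\rho(\sigma)=\rho(\sigma_k)+k-1$, with the free regime $n>m$ handled separately just as in your "interface" discussion. For assertion 1), however, your route is genuinely different. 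The paper decomposes $\Gamma(\mathcal{H},\{\lambda_n\}_n)$ as $\big(\mathbb{C}\oplus\cdots\oplus\mathcal{H}_m\big)\oplus\bigoplus_{n>m}\mathcal{H}_n$, identifies the two blocks of $A^+(f)$ with the truncated $q$--creation operator $b^+_{q,m}(f)$ and the free creation operator $b^+(f)$, and simply cites the known norms of these, taking the maximum. You instead derive the level-wise relation $B_n=\|f\|^2\mathbf{1}+q\,SS^*$ (which is, in substance, the first $(q,m)$--commutation relation of Proposition \ref{(q,m)-comm01}, established only after Proposition \ref{DCT05} in the paper) and solve the resulting recursion $N_n=\|f\|^2+qN_{n-1}$ for $n<m$, $N_n=\|f\|^2$ for $n\ge m$. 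This buys a self-contained proof that does not lean on the literature for \eqref{DCT05g0x}, at the cost of having to verify the pre--Hilbert-space versions of $\|T\|^2=\|T^*T\|$ and $\|c\mathbf{1}+T\|=c+\|T\|$ for $T\ge0$, both of which you use correctly; your observation that the supremum of $N_n$ sits at $n=m-1$ for $q\ge0$ and at $n=0$ for $q<0$ reproduces the paper's $\max\{\Vert b^+_{q,m}(f)\Vert,\Vert b^+(f)\Vert\}$. Note also that your logical order is reversed relative to the paper's: you establish the adjoint identification first and feed it into the norm computation, whereas the paper proves boundedness first (so that $(A^+(f))^*$ exists) and only then identifies it with $a_{q,m}(f)$; your order is legitimate because the adjoint relation is verified on representatives independently of boundedness.
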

	
\begin{proof} The assertion 2) is a trivial consequence of the assertion 1).

By representing the $(q,m)$--Fock space $\Gamma\big(  \mathcal{H},\{ \lambda_n\}_n\big)  $ as the direct sum
\[
\big( \mathbb{C}\oplus\mathcal{H}\oplus\ldots\oplus \mathcal{H}_m\big)\oplus\bigoplus_{n=m+1}^{\infty} \mathcal{H}_n
\]
then, $A^{+}(f)$ can be expressed as the direct sum of two operators, $b_{q,m}^+(f)$ and $b^+(f)$, where $b_{q,m}^+(f)$ and $b^+(f)$ represent the restrictions of the creation operator with the test function $f$ to:
$$\bigoplus_{n=0}^m \big(n-\text{particles space of the $q$--Fock space}\big)$$
and
$$\bigoplus_{n=m+1}^\infty\big(n-\text{particles space of the free Fock space}\big)$$
respectively. In this way, $A^{+}(f)$ is decomposed into these two components. Furthermore, it is well known (see, e.g. \cite{Bo-Kum-Spe97}, \cite{Bo-Spe91}, \cite{Bo-Spe96}, \cite{JiKim2006}) that both $b_{q,m}^+(f)$ and $b^{+}(f)$ are bounded: for any $f\in{\mathcal H}$,
\begin{align}\label{DCT05g0x}
\Vert b^+(f)\Vert=&\Vert f\Vert \notag\\ 
\Vert b^+_{q,m}(f)\Vert =&\Vert f\Vert\cdot \begin{cases} \sqrt{1+q+\ldots+q^{m-1}},&\text{ if }q\in[0,1];\\ 1,&\text{ if }q\in[-1,0)  \end{cases}
\end{align}
and consequently, 
\[\Vert A^+(f)\Vert=\Vert b^+_{q,m}(f)\oplus b^+(f)\Vert=max\{\Vert b^+_{q,m}(f)\Vert,\ \Vert b^+(f)\Vert\}\]
So, \eqref{DCT05g0} is obtained by combining this formula with \eqref{DCT05g0x}.

Now, let's proceed to prove assertion 3), which involves demonstrating that 
\begin{equation}\label{le2-1a0}
\big\langle \Phi,a_{q,m}( f)g\big\rangle=\langle f,g\rangle, \quad\forall f,g\in\mathcal{H}
\end{equation}
and for any $n\ge2$ , $G\in \mathcal{H}^{\otimes (n-1)}$ and $\{f,f_1,\ldots,f_n\}\subset\mathcal{H}$,
\begin{equation}\label{le2-1a1}
\big\langle G,a_{q,m}( f)(f_1\otimes\ldots\otimes f_n)\big\rangle=\big\langle f\otimes G, \lambda_n (f_1\otimes\ldots\otimes f_n)\big\rangle_{\otimes n}
\end{equation}
	
The definition of $a_{q,m}(f)$, i.e. \eqref{alg-crea01}, gives evidently \eqref{le2-1a0}.

For any $n\in\{2,\ldots ,m\}$, we have
\begin{align}\label{symm-op01}
&\big\langle f\otimes G,\lambda_n(f_1\otimes\ldots \otimes f_n)\big\rangle_{\otimes n}\overset{ \eqref{symm-op}}=\sum_{\sigma \in \mathfrak{S}_n} q^{\rho(\sigma)} \big\langle f\otimes G, f_{\sigma(1)} \otimes \ldots\otimes f_{\sigma(n)} \big\rangle_{\otimes n}\notag\\
=&\sum_{k=1}^n\sum_{\sigma\in \mathfrak{S}_{n,k}}
q^{\rho(\sigma)}\langle f, f_{k}\rangle \big\langle G, f_{\sigma(2)}\otimes \ldots\otimes f_{\sigma(n)}\big\rangle_{\otimes (n-1)}  
\end{align}
hereinafter, 
\[\mathfrak{S}_{n,k}:=\big\{ \sigma\in \mathfrak{S}_n:\ \sigma(1)=k\big\},\qquad \text{for any $n\in\mathbb{N}^*$ and $k\in\{1,\ldots,n\}$} \]

For any $\sigma\in\mathfrak{S}_{n,k}$, we define the function $\sigma_k$ as follows:
\[\sigma_k(j):=
\begin{cases} \sigma(j+1),&\text{ if }\sigma(j+1)<k\\
\sigma(j+1)-1,&\text{ if }\sigma(j+1)>k
\end{cases}
\]
Furthermore, for any $k\in\{1,\ldots,n\}$ and $j\in\{1,\ldots,n-1\}$, for any $\{f_1,\ldots, f_n\} \subset \mathcal{H}$, we introduce the vectors $g_j$'s as follows:
\[g_j:= \begin{cases} f_j,&\text{ if }j<k\\
f_{j+1},&\text{ if }k\le j\le n-1    \end{cases}
\]
It is easy to see that for any $n\in\mathbb{N}^*$ and $k\in\{1,\ldots,n\}$, hold the following properties:

$\bullet$ $(\sigma(2),\ldots,\sigma(n))= (\sigma_k(1),\ldots, \sigma_k(n-1))$;
	
$\bullet$ $\sigma_k\in \mathfrak{S}_{n-1}$, and the map $\sigma\longmapsto \sigma_k$ is a bijection from $\mathfrak{S}_{n,k}$ to $\mathfrak{S}_{n-1}$;
	
$\bullet$ $\rho (\sigma)=\rho (\sigma_k)+k-1$;
	
$\bullet$ $(g_1,\ldots,g_{n-1})=(f_1,\ldots,f_{k-1}, f_{k+1},\ldots,f_n)$ and $f_{\sigma(2)}\otimes \ldots\otimes f_{\sigma(n)}=g_{\sigma_k(1)}\otimes \ldots\otimes g_{\sigma_k(n-1)}$.\\
With these notations and properties in place,  \eqref{symm-op01} can be continued as follows by using the induction argument:
\begin{align}\label{symm-op02}
&\big\langle f\otimes G, \lambda_n (f_1\otimes \ldots\otimes f_n)\big\rangle_{\otimes n}\notag\\
=&\sum_{k=1}^nq^{k-1}\left\langle f, f_{k}\right \rangle\sum_{\tau\in\mathfrak{S}_{n-1}}q^{\rho(\tau)} \left\langle G, g_{\tau(1)}\otimes \ldots\otimes g_{\tau(n-1)}\right\rangle_{\otimes (n-1)} \notag\\
=&\sum_{k=1}^nq^{k-1}\langle f,f_k\rangle\big\langle G,\lambda_{n-1}\big(g_1\otimes \ldots\otimes g_{n-1}\big)\big\rangle_{\otimes (n-1)} \notag\\
=&\sum_{k=1}^nq^{k-1}\langle f,f_k\rangle\big\langle G,g_1\otimes\ldots\otimes g_{n-1}\big\rangle\notag\\
=&\sum_{k=1}^nq^{k-1}\langle f,f_k\rangle\big\langle G, f_1\otimes \ldots\otimes f_{k-1}\otimes f_{k+1}\otimes \ldots\otimes f_n\big\rangle\notag\\
\overset{\eqref{alg-crea01}}{=}& \big\langle G,a_{q,m}(f)(f_1\otimes\ldots\otimes f_n)\big\rangle 
\end{align}

Finally, let's consider the case $n>m$. In this scenario, instead of \eqref{symm-op01}, we have
\begin{align}\label{symm-op03}
&\big\langle f\otimes G,\lambda_n (f_1\otimes \ldots \otimes f_n)\big\rangle_{\otimes n}=\big\langle f\otimes G, f_1\otimes\lambda_{n-1} (f_2\otimes \ldots\otimes f_n)\big\rangle_{\otimes n}\notag\\
=&\langle f,f_1\rangle \big\langle G, \lambda_{n-1} (f_2\otimes\ldots\otimes f_n)\big\rangle_{\otimes (n-1)} =\langle f,f_1\rangle \big\langle G, f_2\otimes\ldots\otimes f_n\big\rangle
\end{align}
This expression is, in accordance with Definition \ref{alg-anni}, the same as the scalar product $\big\langle G,a_{q,m}( f)$ $(f_1\otimes\ldots \otimes f_n)\big\rangle $. \end{proof}

\begin{proposition}\label{(q,m)-comm01} Let $\mathcal{H}$ be a (pre--)Hilbert space, let $m\in\mathbb{N} ^*$ and $q\in[  -1,1]$. By introducing \begin{equation}\label{(q,m)-comm01c}
P_{[0,k)}:=\begin{cases}0,&\text{ if }k=0\\
\text{the projector onto }\bigoplus_{j=0} ^{k-1}\mathcal{H}_j, &\text{ if }k\ge 1	
\end{cases},\qquad \forall k\in \mathbb{N}^*
\end{equation} 
we have the following equalities:
\begin{equation}\label{(q,m)-comm01a}
A(f)A^+(g)-	qA^+(g)A(f)\, P_{[0,m)}=\langle f, g\rangle \,,\quad \forall\, f,g\in\mathcal{H}
\end{equation}
and 
\begin{equation}\label{(q,m)-comm01b}
A(f)P_{[0,k+1)}=P_{[0,k)}A(f)\,,\quad \forall\, k\in \mathbb{N} \text{ and } f\in\mathcal{H}
\end{equation}
\end{proposition}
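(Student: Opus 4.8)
The plan is to verify both operator identities on homogeneous vectors. Since the particle spaces $\mathcal{H}_n$ span the $(q,m)$--Fock space and every operator involved is linear, it suffices to fix $n\in\mathbb{N}$ and an elementary tensor $x=g_1\otimes\cdots\otimes g_n\in\mathcal{H}_n$ (with $x=\Phi$ when $n=0$) and check each equality applied to $x$, reading off the actions of $A^+(g)$ and $A(f)=a_{q,m}(f)$ from Definition \ref{alg-Fock} and \eqref{alg-crea01}, and interpreting the right--hand side $\langle f,g\rangle$ of \eqref{(q,m)-comm01a} as $\langle f,g\rangle$ times the identity.

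For \eqref{(q,m)-comm01b} I would first invoke the grading property (fact 1 in the text): $A(f)$ maps $\mathcal{H}_n$ into $\mathcal{H}_{n-1}$ for $n\ge1$ and annihilates $\mathcal{H}_0$. The identity is then pure bookkeeping about how $P_{[0,\,\cdot\,)}$ interacts with this shift. On $x\in\mathcal{H}_n$: if $n\le k$, both sides equal $A(f)x$, which lands in $\mathcal{H}_{n-1}\subset\bigoplus_{j=0}^{k-1}\mathcal{H}_j$ and is thus fixed by $P_{[0,k)}$; if $n\ge k+1$, both sides vanish (the left because $P_{[0,k+1)}x=0$, the right because $A(f)x\in\mathcal{H}_{n-1}$ has particle number $n-1\ge k$, hence is killed by $P_{[0,k)}$). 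This settles \eqref{(q,m)-comm01b}.

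The substance is \eqref{(q,m)-comm01a}, which I would split according to the particle number $n$ of $x$ relative to $m$, because both the annihilation rule \eqref{alg-crea01} and the projector $P_{[0,m)}$ change behaviour at the threshold $m$. In the regime $n\le m-1$, the vector $A^+(g)x=g\otimes g_1\otimes\cdots\otimes g_n$ has $n+1\le m$ particles, so both annihilations use the $q$--weighted sum in \eqref{alg-crea01}. Expanding $A(f)A^+(g)x$, the first summand (removing the prepended $g$) yields exactly $\langle f,g\rangle\,x$, while the remaining summands, after reindexing $k\mapsto k-1$, equal precisely $q\cdot A^+(g)A(f)x$; the extra power of $q$ arises because each original slot $g_j$ sits one position further to the right once $g$ is prepended. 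Since $P_{[0,m)}x=x$ here, these two contributions cancel and leave $\langle f,g\rangle\,x$. In the complementary regime $n\ge m$ the projector kills the second term ($P_{[0,m)}x=0$), while $A^+(g)x$ has $n+1\ge m+1$ particles, so $A(f)$ acts by the free rule in \eqref{alg-crea01}, extracting $\langle f,g\rangle$ from the leading slot $g$ and returning $\langle f,g\rangle\,x$; thus \eqref{(q,m)-comm01a} again reduces to $\langle f,g\rangle\,x$.

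The only delicate point I anticipate is the boundary value $n=m$: there $A(f)$ acting directly on $x$ would still use the $q$--sum (since $m<m+1$), yet $A(f)A^+(g)x$ uses the free rule (since $m+1\ge m+1$), so the two products are genuinely governed by different branches of \eqref{alg-crea01}. It is exactly the vanishing $P_{[0,m)}x=0$ that reconciles these branches and makes the identity hold. Keeping this threshold bookkeeping straight, together with the $k\mapsto k-1$ reindexing that produces the matching factor of $q$ in the regime $n\le m-1$, is the main thing to get right; the remainder is routine substitution into \eqref{alg-crea01}.
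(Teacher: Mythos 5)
Your proposal is correct and follows essentially the same route as the paper's own proof: both verify the identities on elementary tensors, split \eqref{(q,m)-comm01a} into the regimes $n<m$ (where the $q$--weighted rule \eqref{alg-crea01} applies to $A(f)A^+(g)x$ and the reindexing produces the extra factor $q$) and $n\ge m$ (where $P_{[0,m)}x=0$ and the free rule yields $\langle f,g\rangle x$ directly), and prove \eqref{(q,m)-comm01b} by the identical two-case grading argument. Your explicit remark about the threshold $n=m$ is a nice clarification of why the projector is exactly what is needed, but the substance matches the paper.
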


\begin{remark} 1) The formula \eqref{(q,m)-comm01a} is evidently a generalization of the standard $q$--commutation relation. \\
2) As $A(f)P_{[0,m)}\Phi=A(f)\Phi=0$, one can substitute $P_{[0,m)}$ in \eqref{(q,m)-comm01a} with $P_{(0,m)}$ which is defined as the projector onto  $\bigoplus_{n=1}^m\mathcal{H}_n$. 
\end{remark}

\begin{proof} {\bf (of Proposition \ref{(q,m)-comm01})} 
The equality \eqref{(q,m)-comm01a} is surely equivalent to  asserting
\begin{equation}\label{(q,m)-comm01d}
\big(A(f)A^+(g)-qA^+(g)A(f)\, P_{[0,m)}\big)\big(f_1\otimes \ldots\otimes f_n\big) =\langle f, g\rangle\ f_1\otimes \ldots\otimes f_n
\end{equation}
for any $n\in\mathbb{N}$ and $\{f_1,\ldots,f_n\}\subset \mathcal{H}$. Now we proceed to prove this new formulation.

In the case of $n< m$, it holds that $P_{[0,m)}G=G$ for any $G\in \mathcal{H}_n$, and consequently, $A^+(g)G\in\bigoplus _{k=0}^m\mathcal{H}_k$. So
\begin{align}
&A(f)A^+(g)\big(f_1\otimes \ldots\otimes f_n\big) =A(f)\big(g\otimes f_1\otimes \ldots\otimes f_n\big)\notag\\
=&\langle f, g\rangle f_1\otimes \ldots\otimes f_n+
\sum_{k=1}^nq^{k}\,\langle f, f_k\rangle g\otimes f_1\otimes\ldots \otimes f_{k-1}\otimes f_{k+1}\otimes\ldots\otimes f_n\label{(q,m)-comm01e}
\end{align}
and
\begin{align}
&qA^+(g)A(f)\,P_{[0,m)}\big(f_1\otimes\ldots\otimes f_n\big)
=qA^+(g)A(f)\big(f_1\otimes \ldots\otimes f_n\big)\notag\\
=&q\sum_{k=1}^nq^{k-1}\langle f,f_k\rangle g\otimes f_1\otimes\ldots \otimes f_{k-1}\otimes f_{k+1}\otimes\ldots\otimes f_n
\label{(q,m)-comm01f}
\end{align}
Equality \eqref{(q,m)-comm01d} is obtained by combining \eqref{(q,m)-comm01e} and \eqref{(q,m)-comm01f}.

In the case of $n\ge m$ (i.e., $n+1> m$), it holds that $P_{[0,m)}G=0$ for any $G\in \mathcal{H}_n$, and consequently, $A^+(g)G\in\bigoplus _{k=m+1}^\infty\mathcal{H}_k $. Therefore \eqref{(q,m)-comm01d} is derived as follows:
\begin{align}
&\big(A(f)A^+(g)-qA^+(g)A(f)\, P_{[0,m)}\big)\big(f_1\otimes \ldots\otimes f_n\big) =A(f)A^+(g)\big(f_1\otimes \ldots\otimes f_n\big)\notag\\
=&A(f)g\otimes f_1\otimes \ldots\otimes f_n\overset{n+1>m}{=}\langle f, g\rangle\ f_1\otimes \ldots\otimes f_n\label{(q,m)-comm01g}
\end{align}

Now let's proceed to prove \eqref{(q,m)-comm01b}. In fact, we'll show that for any $n\in \mathbb{N}$ and $G_n\in \mathcal{H}_n$
\begin{equation}\label{(q,m)-comm01h}
A(f)P_{[0,k+1)}G_n=P_{[0,k)}A(f)G_n\,,\quad \forall\, k\in \mathbb{N}^*\text{ and }f\in{\mathcal H}
\end{equation}

If $n\ge k+1$ (equivalently, $n-1\ge k$), both sides of the equality in \eqref{(q,m)-comm01h} are zero because

$\bullet$ $P_{[0,k+1)}G_n=0$;

$\bullet$ $A(f)G_n\in\mathcal{H}_{n-1}$ and so $P_{[0,k)}A(f)G_n=0$.

If $n\le k$ (equivalently, $n< k+1$), both sides of the equality in \eqref{(q,m)-comm01h} are $A(f)G_n$ because

$\bullet$ $P_{[0,k+1)}G_n=G_n$ and so $A(f)P_{[0,k+1)}G_n=A(f)G_n$;

$\bullet$ $A(f)G_n\in\mathcal{H}_{n-1}$ and so  $P_{[0,k)}A(f)G_n=A(f)G_n$.\\
Summing up, the proof is completed. \end{proof}

Proposition \ref{(q,m)-comm01} suggests that we should generalize the standard $q$--commutation relation and the quon algebra.

\begin{definition}\label{QuanAlg}
Let $q\in [-1,1]$ and $m\in{\mathbb N}^*\cup \{+\infty\}$, and let ${\mathcal H}$ be a (pre--)Hilbert space. We refer to the *-algebra generated by $\{A(f),A^+(g),{\bf p}_k:\, f,g\in {\mathcal H},\,k\in {\mathbb Z}\}$ as the {\bf Quon algebra} with the parameter $(q,m)$ (commonly denoted as ${\mathcal Q}_{q,m}$), here, $A(f)$'s, $A^+(g)$'s and ${\bf p}_k$'s satisfy the following properties:
\begin{align}\label{QuanAlg01a}
{\bf p}_k={\bf p}^*_k,\qquad {\bf p}_k{\bf p}_h={\bf p}_h{\bf p}_k={\bf p}_k,\qquad \forall k\le h
\end{align}
\begin{align}\label{QuanAlg01b}
A^+(f)=\big(A(f)\big)^* \text{ and } A(f){\bf p}_{k+1}={\bf p}_kA(f)\,,\qquad \forall\, k\in \mathbb{Z} \text{ and } f\in\mathcal{H}
\end{align}
and 
\begin{align}\label{QuanAlg01c}
A(f)A^+(g)-qA^+(g)A(f){\bf p}_{m}=\langle f,g\rangle\,,\qquad \forall\,  f,g\in\mathcal{H}
\end{align}
\end{definition}\medskip

In this paper, we'll refer to the following:

$\bullet$ the equality in \eqref{QuanAlg01c} as the {\bf first $(q,m)$--commutation relation};

$\bullet$ the second equality in \eqref{QuanAlg01b} as the {\bf second $(q,m)$--commutation relation};

$\bullet$ the second equality in \eqref{QuanAlg01a} as the {\bf third $(q,m)$--commutation relation};

$\bullet$ $A(f)$ (respectively, $A^+(f)$) as the {\bf annihilation} (respectively, {\bf creation}) operator with the test function $f\in{\mathcal H}$. 

\begin{remark} 1) The first $(q,m)$--commutation relation \eqref{QuanAlg01c} requires that ${\mathcal Q}_{q,m}$ is unital. Additionally, we'll adopt the conventions ${\bf a}^0:=1$, $\prod_{j\in\emptyset}{\bf a}_j=1$ and $\sum_{j\in\emptyset}{\bf a}_j=0$.

2) Due to \eqref{QuanAlg01a} and \eqref{QuanAlg01b}, we obtain
\begin{align}\label{QuanAlg01d}
&{\bf p}_{k+1}A^+(f)=A^+(f){\bf p}_k\,,\qquad\forall \, k\in\mathbb{Z}\text{ and }f\in\mathcal{H}\notag\\
&A^+(g)A(f){\bf p}_{k+1}={\bf p}_{k+1}A(f)A^+(g), \qquad\forall \, k\in\mathbb{Z}\text{ and }f,g\in\mathcal{H}\notag\\
&A(f)A^+(g){\bf p}_{k}={\bf p}_{k}A(f)A^+(g), \qquad \forall \, k\in\mathbb{Z}\text{ and }f,g\in\mathcal{H}
\end{align}
In particular, the expression on the left--hand side of the equality in \eqref{QuanAlg01c} is equal to $A(f)A^+(g)-q{\bf p}_{m}A^+(g)A(f)$. Furthermore, by applying the second $(q,m)$--commutation relation (i.e., the second equality in \eqref{QuanAlg01b}) and its conjugate form (i.e. the first equality in \eqref{QuanAlg01d}), an induction's argument ensures that for any $n\in{\mathbb N}^*$, $\{f_1,\ldots,f_n\}\subset{\mathcal H}$ and $\varepsilon\in \{-1,1\}^n$,
\begin{align}\label{QuanAlg01f}
{\bf p}_kA^{\varepsilon(1)}(f_1)\ldots A^{\varepsilon(n)} (f_n)= A^{\varepsilon(1)}(f_1) \ldots A^{\varepsilon(n)}(f_n) {\bf p}_{k-\varepsilon(1)-\ldots- \varepsilon(n)}
\end{align}   \end{remark}

As demonstrated in Proposition \ref{(q,m)-comm01},
it becomes evident that $G_{q,m}(\mathcal{H})$ serves as a concrete example of the Quon algebra with the parameter $(q,m)$. In this algebra,

$\bullet$ we introduce $P_{[0,k)}$ following the prescription in \eqref{(q,m)-comm01c} for any $k\in {\mathbb N}$ and set $P_{[0,k)}:=0$ for any $k<0$;

$\bullet$ the symbols $A^+(f)$ and $A(f)$ respectively  represent the $(q,m)$--creation and $(q,m)$--annihilation operators associated with the test function $f\in\mathcal{H}$.

At the end of this section, we provide two remarks. One of them pertains to the construction of the set ${\mathcal P}_n$, while the other concerns the {\it vacuum distribution} of the {\it field operator} $A(f)+A^+(f)$ with $0\ne f\in{\mathcal H}$. 

To understand the construction of the set ${\mathcal P}_n$, we introduce the set ${\mathcal P}_n(\varepsilon)$, defined as follows, for any $n\ge 3$ and $\varepsilon\in\{-1,1\}_+^{2n}$:
\begin{align}\label{QuanAlg08}
\Big\{\{(l_h,r_h)\}_{h=1}^n\in {\mathcal P}_n:\,\varepsilon^{-1}(\{-1\}) =\big\{l_h:h\in\{1,\ldots,n\}\big\}\Big\}
\end{align} 
It is evident that ${\mathcal P}_n$ is the union of all ${\mathcal P}_n(\varepsilon)$ for $\varepsilon \in\{-1,1\}_+ ^{2n}$, and further,
${\mathcal P}_n(\varepsilon)\cap {\mathcal P}_n(\varepsilon') =\emptyset$ whenever $\varepsilon\ne \varepsilon'$. 

One of main result of \cite{lu2023c} pertains to the construction of ${\mathcal P}_n(\varepsilon)$. To state this result, we recall that (see \cite{aclu-qed}), for $\varepsilon\in\{-1,1\}_+^{2n}$, there exists a unique element in $NCPP(2n)$, denoted as $\{(l^\varepsilon_h,r^\varepsilon_h)\}_{h=1}^n$, such that $\varepsilon^{-1}(\{-1\})=\big\{l_h:h\in\{1,\ldots ,n\}\big\}$. Let's introduce, for any $\theta:=\{(l_h,r_h)\}_{h=1}^n\in NCPP(2n)$, and for any $k\in\{1,\ldots,n\}$, the concept of {\it depth} of the pair $(l_k,r_k)$ in $\theta$ as follows:
\[d(l_k,r_k):=\big\vert\{h:l_h<l_k<r_k<r_h\}\big\vert\]
The construction of ${\mathcal P}_n(\varepsilon)$ is strongly related to the depth of the pairs within the non--crossing pair partition $\{(l^\varepsilon_h, r^\varepsilon_h) \}_{h=1}^n$. In fact, it has been proved in \cite{lu2023c} that any element of ${\mathcal P}_n(\varepsilon)$ consists of:

$\bullet$ all pairs $(l^\varepsilon_h,r^\varepsilon_h)$ with the depth greater than or equal to 2;

$\bullet$ a generic pair partition of the set $ \big\{l^\varepsilon_h,r^\varepsilon_h: d(l^\varepsilon_h,r^\varepsilon_h)<m\big\}$.

Finally, for any $0\ne f\in{\mathcal H}$, in \cite{lu2023d}, we have calculated $\mu_q$, which is the vacuum distribution of the field operator $A(f)+A^+(f)$ when our Fock space is $(q,2)$--Fock space. This distribution strongly depends on the value of $q$:

$\bullet$ in the case of $q=-1$, $\mu_q$ is the two point distribution over $\{ -\Vert f\Vert ^2,\Vert f\Vert^2\}  $ with the equi--probability $\frac{1}{2}$;

$\bullet$ in the case of $q=0$, $\mu_q$ is the semi--circle distribution on the interval $(  -2\Vert f\Vert ,2\Vert f\Vert )$;

$\bullet$ for any $q$ in the interval $\left( -1,\frac{1}{2}\right]$, $\mu_q$ is absolutely continuous, and its probability density function is determined;

$\bullet$ for any $q$ in the interval $\left(  \frac{1}{2},1\right]$,  $\mu_q$ is {\bf not} absolutely continuous; in fact it takes the form of $\nu_q+a\big(\delta_b+ \delta_{-b}\big)$, where,

\ \ ---$\nu_q$ is absolutely continuous and its density 
function (although not a probability density function) is computed;

\ \ ---$a$ and $b$ are positive constant that are determined;

\ \ ---$\delta_x$ represents the Dirac measure concentrated at $x\in{\mathbb R}$.

\bigskip

\section{Normally ordered form of a general word of Quon algebra}\label{qm-s3}

The primary objective of this section is to establish a crucial structural theorem--Wick's theorem within the Quon algebra framework. In essence, our goal is to ascertain the normally ordered representation of any generic {\bf word} in the Quon algebra. Throughout this paper, we define a {\it word in the Quon algebra} as a product comprising a finite number of creation--annihilation operators, as well as ${\bf p}_k$'s, specifically, such elements taking the form 
\begin{align}\label{QuanAlg02a}
{\bf p}^{\epsilon(1)}_{k_1}A^{\varepsilon(1)}(f_1) {\bf p}^{\epsilon(2)}_{k_2}A^{\varepsilon(2)}(f_2) \ldots {\bf p}^{\epsilon(n)}_{k_n}A^{\varepsilon(n)} (f_n){\bf p}^{\epsilon(n+1)}_{k_{n+1}}
\end{align}
Here, 

$\bullet$ $n\in{\mathbb N}^*$, $\{f_1,\ldots, f_n\}\subset {\mathcal H}$, $\{k_1,\ldots, k_n\}\subset {\mathbb N}$,  $\varepsilon\in\{-1,1\}^n$ and $\epsilon\in\{0,1\}^n$;

$\bullet$ for any $k\in {\mathbb N}$ such that ${\bf p}_k\ne0$, we define ${\bf p}_k^\epsilon$ as follows: ${\bf p}_k^\epsilon:=\begin{cases}
{\bf p}_k, &\text{ if }\epsilon=1\\ 1, &\text{ if }\epsilon=0\end{cases}$.

Let $X$ be a set with a cardinality of at least 2, and let $\{x_1,\ldots,x_n\}\subset X$ with a cardinality of $n\ge2$, i.e., $x_k$'s are pairwise distinct. We define the following set: 
\begin{align}\label{QuanAlg02x}
X^\infty&:=\{\text{maps from }{\mathbb Z}\text { to } X\}\notag\\
X^n&:=\{\text{maps from }\{1,\ldots,n\}\text { to } X\},\quad \forall n\in{\mathbb N}^*
\end{align}
Additionally, for any given $x_0\in X$ and $n\in{\mathbb N}^*$, we can embed $X^n$ into $X^\infty$ by identifying $\rho\in X^n$ with the following $\rho'\in X^\infty$:
\begin{align}\label{QuanAlg02y}
\rho'(k):=\begin{cases}\rho(k),&\text{ if }k\in\{1,\ldots,n\}\\ x_0,&\text{ otherwise }\end{cases}
\end{align}
Moreover, throughout this paper, we'll adopt the following notations for any pair of integers $n$ and $N$ such that $n\leq N$:
\begin{align}\label{QuanAlg02z}
&[n,N]:=\{k:\,n\le k\le N\};\quad (n,N):=\{k:\,n< k< N\}\notag\\          &(n,N]:=\{k:\,n< k\le N\};\quad [n,N):=\{k:\,n\le k< N\} 
\end{align}

The normally ordered form of a general word in the Quon algebra typically involves arranging the operators in a specific order so that creation operators come before annihilation operators. More precisely, in the algenera ${\mathcal Q}_{q,m}$, a word is considered {\bf normally ordered} if it has the form
\begin{align}\label{QuanAlg02b}
{\bf p}^{\epsilon}_{k}\cdot(\text{a product of finite many creators})\cdot (\text{a product of finite many annihilators})
\end{align}
 
Using \eqref{QuanAlg01f}, we can rewrite the word \eqref{QuanAlg02a} as follows:
\begin{align}\label{QuanAlg02c}
&{\bf p}^{\epsilon(1)}_{k_1} {\bf p}^{\epsilon(2)} _{k_2+ \varepsilon(1)}{\bf p}^{\epsilon(3)}_{k_3+ \varepsilon(1)+ \varepsilon(2)}\ldots{\bf p} ^{\epsilon(n+1)}_{k_{n+1} +  \varepsilon(1)+\ldots+ \varepsilon(n)}\notag\\
&\cdot A^{\varepsilon(1)}(f_1) A^{\varepsilon (2)}(f_2) \ldots A^{\varepsilon(n)} (f_n) 
\end{align}
Where, in accordance with \eqref{QuanAlg01a},
the word \eqref{QuanAlg02c} equals

$\bullet$ in the case of $\epsilon(j)=0$ for all $j\in[1,n+1]$, to
\begin{align}\label{QuanAlg02d}
A^{\varepsilon(1)}(f_1) A^{\varepsilon(2)}(f_2) \ldots A^{\varepsilon(n)} (f_n)
\end{align}

$\bullet$ in alternative case (i.e. when $\epsilon(j)=1$ for some $j\in[1,n+1]$), to
\begin{align*}
{\bf p}_{k(\varepsilon)}A^{\varepsilon(1)}(f_1) A^{\varepsilon(2)}(f_2) \ldots A^{\varepsilon(n)} (f_n)
\end{align*}
with
\[k(\varepsilon):=\min\{k_{1},\,k_h+\varepsilon(1)+
\ldots +\varepsilon(h-1):\ h\in[2,n+1]\text{ with } \epsilon(h)=1\} 
\]
Thus, establishing the normally ordered form of the expression in \eqref{QuanAlg02a} is equivalent to determining the normally ordered form of the expression in \eqref{QuanAlg02d}. Furthermore, we can always assume that $\varepsilon(1)=-1$ and $\varepsilon(n)=1$. In other words, the primary task is to determine the normally ordered form of a word in the following set:
\begin{align}\label{QuanAlg02e}
{\mathcal W}_{q,m}:=\{A^{\varepsilon(1)}(f_1)& A^{\varepsilon(2)}(f_2) \ldots A^{\varepsilon(n)} (f_n):n\in{\mathbb N}^*,\ \{f_1,\ldots,f_n\} \subset{\mathcal H}\notag\\
&\text{ and }\varepsilon\in\{-1,1\}^n\text{ with }\varepsilon(1)=-1\text{ and }\varepsilon(n)=1\}
\end{align} 
 
In the following, 
for any ${\bf f}:=\{f_k:\,k\in{\mathbb N}\}\subset{\mathcal H}$, for any $n\in {\mathbb N}$ and finite subset $I=\{i_1,\ldots, i_n\}\subset {\mathbb N}$ with the order $i_1<\ldots< i_n$, we introduce the following short notations:
\begin{align}\label{QuanAlg02e1}
A_{\bf f}^+ (I)&:=A^+(f_{i_1},\ldots ,f_{i_k})
:=A^+(f_{i_1}) \ldots A^+(f_{i_k}) \notag\\
A_{\bf f}(I)&:=A(f_{i_1}, \ldots ,f_{i_k}) :=A(f_{i_1}) \ldots A (f_{i_k})
\end{align}
It is worth noting that $A_{\bf f}^+ (I)$ may differ from the conjugate of $A_{\bf f}(I)$ if $\vert I\vert\ge 2$. Additionally, both $A_{\bf f}^+ (\emptyset)$ and  $A_{\bf f}(\emptyset)$ (specifically, $A_{\bf f}^+ (\{i_1,\ldots, i_n\})$ and $A_{\bf f}(\{i_1,\ldots, i_n\})$ with $n=0$) should be interpreted as the identity element of ${\mathcal Q}_{q,m}$.

As a first step, we prove the following result: 

\begin{proposition}\label{QuanAlg03} For any $n\in {\mathbb N}$ and $\{f,g_1,\ldots,g_n\}\subset{\mathcal H}$, utilizing the notations introduced in \eqref{QuanAlg02z} and \eqref{QuanAlg02e1}, we obtain:
\begin{align}\label{QuanAlg03a}
&A(f)A^+(g_1,\ldots,g_n)\notag\\
=&\big(q {\bf p}_{m}\big)^n A^+_{\bf g}((0,n])A(f)+\sum_{k=1}^n \big(q{\bf p}_{m}\big)^ {k-1}\langle f,g_k\rangle  A^+_{\bf g}((0,n]\setminus\{k\})
\end{align}  
\end{proposition}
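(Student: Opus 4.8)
The plan is to establish \eqref{QuanAlg03a} by induction on $n$, peeling off the \emph{rightmost} creator at each stage and invoking the first $(q,m)$--commutation relation \eqref{QuanAlg01c} exactly once. For the base case $n=0$ both sides collapse to $A(f)$, since $A^+_{\bf g}(\emptyset)=1$ by the convention recorded after \eqref{QuanAlg02e1} and the sum over $k$ is empty.

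For the inductive step I would write $A(f)A^+(g_1,\ldots,g_n)=\big(A(f)A^+(g_1,\ldots,g_{n-1})\big)A^+(g_n)$ and apply the induction hypothesis to the bracketed factor, then distribute the trailing $A^+(g_n)$ over the two resulting pieces. The contraction sum $\sum_{k=1}^{n-1}(q{\bf p}_m)^{k-1}\langle f,g_k\rangle A^+_{\bf g}((0,n-1]\setminus\{k\})$ simply absorbs $A^+(g_n)$ into last position (because $n$ exceeds every index of $(0,n-1]\setminus\{k\}$), producing the terms $k\in\{1,\ldots,n-1\}$ of the target sum; and applying \eqref{QuanAlg01c} to the factor $A(f)A^+(g_n)$ inside the leading piece splits it into a scalar part $\langle f,g_n\rangle$, which contributes $(q{\bf p}_m)^{n-1}\langle f,g_n\rangle A^+_{\bf g}((0,n-1])$, i.e. exactly the $k=n$ term of the target sum, plus an operator part that has to be reconciled with $(q{\bf p}_m)^n A^+_{\bf g}((0,n])A(f)$.

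The main obstacle is this last reconciliation, since after \eqref{QuanAlg01c} the operator part reads $q\,(q{\bf p}_m)^{n-1}A^+_{\bf g}((0,n])A(f){\bf p}_m$, carrying both an extra scalar $q$ and an undesired \emph{trailing} projector ${\bf p}_m$. Writing $W:=A^+_{\bf g}((0,n])A(f)=A^+(g_1)\cdots A^+(g_n)A(f)$, whose exponent sequence satisfies $\sum_h\varepsilon(h)=n-1$, I would absorb this trailing projector using the pushing relation \eqref{QuanAlg01f}, which gives ${\bf p}_m W=W{\bf p}_{m-n+1}$, together with the ordering and idempotency relations \eqref{QuanAlg01a}, namely ${\bf p}_{m-n+1}{\bf p}_m={\bf p}_{m-n+1}$ (valid because $m-n+1\le m$ for $n\ge1$) and ${\bf p}_m^2={\bf p}_m$. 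A short computation then shows $q\,(q{\bf p}_m)^{n-1}W{\bf p}_m=(q{\bf p}_m)^n W$, which is precisely the non--contracted term required; summing the pieces completes the step. The one point needing separate attention is the edge case $n=1$, where $(q{\bf p}_m)^{0}=1$ supplies no leading projector, but there the sum $\sum_h\varepsilon(h)$ of $A^+(g_1)A(f)$ equals $0$, so \eqref{QuanAlg01f} yields ${\bf p}_m A^+(g_1)A(f)=A^+(g_1)A(f){\bf p}_m$ and the identity follows at once.
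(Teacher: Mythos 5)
Your proposal is correct and follows essentially the same route as the paper: induction on $n$, peeling off the rightmost creator $A^+(g_n)$, applying the inductive hypothesis to $A(f)A^+(g_1,\ldots,g_{n-1})$, and invoking \eqref{QuanAlg01c} exactly once on the resulting factor $A(f)A^+(g_n)$. The only cosmetic difference is that the paper first rewrites \eqref{QuanAlg01c} with the projector on the left (as $q{\bf p}_m A^+(g)A(f)$, via \eqref{QuanAlg01d}) before absorbing it into $\big(q{\bf p}_m\big)^{n}$, whereas you keep the trailing ${\bf p}_m$ and push it through $W$ with \eqref{QuanAlg01f} and the ordering relations \eqref{QuanAlg01a}; your handling of this reconciliation (including the $n=1$ edge case) is in fact spelled out more explicitly than in the paper's own proof.
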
    

\begin{proof} For $n=0$, \eqref{QuanAlg03a} simplifies to the trivial equality $A(f)=A(f)$ because  $(q {\bf p}_m)^0$ equals to the identity, and as usual, $\sum_{k=1}^0:=0$.

When $n=1$, \eqref{QuanAlg03a}  reduces to \eqref{QuanAlg01c}. Assuming that \eqref{QuanAlg03a} holds for $n=N\ge1$, i.e.,
\begin{align}\label{QuanAlg03k}
&A(f)A^+_{\bf g}((0,N])\notag\\
=&\big(q {\bf p}_{m}\big)^NA^+_{\bf g}((0,N])A(f)
+\sum_{k=1}^N \big(q{\bf p}_{m}\big)^ {k-1}\langle f, g_k\rangle A^+_{\bf g}((0,N]\setminus\{k\})
\end{align}  
let's consider the case $n=N+1$. Using the fact that $A^+(g_1,\ldots,g_{N+1})=A^+_{\bf g}((0,N+1])=A^+_{\bf g}((0,N]) A^+(g_{N+1})$, \eqref{QuanAlg03k} trivially implies the following:
\begin{align}\label{QuanAlg03m}
&A(f)A^+(g_1,\ldots,g_{N+1})=A(f)A^+_{\bf g}((0,N]) A^+(g_{N+1})\notag\\
=&\big(q {\bf p}_{m}\big)^NA^+_{\bf g}((0,N])A(f) A^+(g_{N+1})\notag\\
&+\sum_{k=1}^N \big(q{\bf p}_{m}\big)^ {k-1}\langle f,g_k \rangle A^+_{\bf g}((0,N] \setminus\{k\}) A^+(g_{N+1})
\end{align}
The first $(q,m)$--commutation relation  \eqref{QuanAlg01c} yields $ A(f) A^+(g_{N+1})=\langle f,g_{N+1} \rangle +q{\bf p}_{m}A^+(g_{N+1})A(f)$, we can see that the expression in \eqref{QuanAlg03m} is equal to
\begin{align*}
&\langle f,g_{N+1}\rangle \big(q{\bf p}_m\big)^N A^+_{\bf g}((0,N])+\big(q{\bf p}_m\big)^{N+1} A^+_{\bf g}((0,N])A^+(g_{N+1})A(f)\\
+&\sum_{k=1}^N \big(q{\bf p}_{m}\big)^ {k-1}\langle f,g_k \rangle A^+_{\bf g}((0,N] \setminus\{k\}) A^+(g_{N+1})
\end{align*}
This expression can be rewritten as:
\begin{align*}
\big(q{\bf p}_m\big)^{N+1}A^+_{\bf g}((0,N+1]) A(f)
+\sum_{k=1}^{N+1} \big(q{\bf p}_{m}\big)^ {k-1}\langle f,g_k \rangle A^+_{\bf g}((0,N+1] \setminus\{k\}) 
\end{align*}
because 

$\bullet$ $\langle f,g_{N+1}\rangle \big(q{\bf p}_m\big)^NA^+_{\bf g}((0,N])=\langle f,g_k \rangle\big(q{\bf p}_{m}\big)^ {k-1} A^+_{\bf g}((0,N+1] \setminus\{k\}) \Big\vert_{k=N+1}$;

$\bullet$ $A^+_{\bf g}((0,N]\setminus\{k\})A^+(g_{N+1})=A^+_{\bf g}((0,N+1]\setminus\{k\})$ for any $k\in(0,N]$;

$\bullet$ $A^+_{\bf g}((0,N])A^+(g_{N+1})A(f)=A^+_{\bf g}((0,N+1])A(f)$.
 \end{proof} 

\begin{remark} Using the notations introduced prior to Proposition \ref{QuanAlg03} and defining \begin{align}\label{QuanAlg03a2}
{\mathbb N}_f&:=\text{the set of all finite subsets of } {\mathbb N}\notag\\
I(i)&:=I\cap (0,i),\  \text{ for any } I\in{\mathbb N}_f \text{ and } i\in {\mathbb N}
\end{align}
we can reinterpret \eqref{QuanAlg03a} as follows: for any ${\bf g}:=\{g_k:\,k\in {\mathbb N}\}\subset{\mathcal H}$ and $I\in{\mathbb N}_f$,
\begin{align}\label{QuanAlg03a0z}
A(f)A^+_{\bf g}(I)=\big(q {\bf p}_{m}\big)^{\vert I\vert}A^+_{\bf g}(I)A(f)+\sum_{i\in I} \big(q{\bf p}_{m}\big)^ {\vert I(i)\vert}\langle f,g_i\rangle A^+_{\bf g}(I\setminus \{i\})
\end{align}                        \end{remark}

Now, let's examine the normally ordered form of a generic element from the set ${\mathcal W}_{q,m}$, which is a word in the form:
\begin{align}\label{QuanAlg03a1}
A(f_1)A^+_{\bf g}(I_1)A(f_2)A^+_{\bf g}(I_2)\ldots A(f_n)A^+_{\bf g}(I_n)
\end{align}
here 

$\bullet$ $n\in{\mathbb N}^*$ and ${\bf g}:= \{g_k:\,k\in {\mathbb N}\}\subset{\mathcal H}$;

$\bullet$ $I_k\in {\mathbb N}_f$ for all $k\in\{1,\ldots,n\}$ and the sets $I_k$'s are {\bf completely ordered}, i.e., they are pairwise disjoint and $\max I_k<\min I_{k+1}$ holds for any $k$;

Moreover, a word in the form \eqref{QuanAlg03a1} does not imply that all annihilation operators are separated one by one by creation operators. In fact, some $I_k$ could be empty, and in this case, the definition specifies that $A(f_k)A^+_{\bf g}(I_k)A(f_{k+1})= A(f_k)A(f_{k+1})$. 

To represent the normally ordered form of the expression in \eqref{QuanAlg03a1} effectively, we'll introduce the following notations:

1) For any $n\ge 2$ and completely ordered sets $I_1,\ldots,I_n\in {\mathbb N}_f$,  
\begin{align}\label{QuanAlg04x}
I_k^h:=\cup_{j=k}^hI_j\text{ and } I_k^h(i):=I_k^h\cap (0,i)\quad \text{for any }  1\le k\le h\le n\text{ and }i\in {\mathbb N}
\end{align}

2) For any $n\ge 2$ and $k\in\{1,\ldots,n\}$, for any $n_0\in{\mathbb N}$ and completely ordered sets $I_{n_0+1},\ldots,I_{n_0+n}\in {\mathbb N}_f$, 
\begin{align}\label{QuanAlg06b1}
&[I_{n_0+1},\ldots,I_{n_0+n}]_k\notag\\
:=&\big\{ \{(l_h,r_h)\}_{h=1}^k :n_0+1\le l_1<\ldots <l_k\le n_0+n,\,\ r_k\in I_{l_k}^{n_0+n},\notag\\ 
&\hspace{2.4cm} r_{k-1}\in I_{l_{k-1}}^{n_0+n}\setminus \{r_k\}, \ldots,r_1\in I_{l_1}^{n_0+n} \setminus\{r_h:1< h\le k\} \big\}\notag\\
=&\big\{ \{(l_h,r_h)\}_{h=1}^k :n_0+1\le l_1<\ldots <l_k\le n_0+n \text{ and }\notag\\ 
&\hspace{2.4cm}r_h\in I_{l_h}^{n_0+n}\text{ for all } h\in(0,k]\text{ with }\vert\{r_1,\ldots,r_k\}\vert=k \big\}
\end{align}

3) For any $\{(l_h,r_h)\}_{h=1}^k\in[I_{n_0+1}, \ldots,I_{n_0+n}]_k$ and $s\in(n_0,n_0+n]$, by adopting the convention $l_0:=n_0$ and $l_{k+1}:=n_0+n+1$,
\begin{align} \label{QuanAlg06b}
&c_{I_{n_0+1},\ldots,I_{n_0+n}} (\{(l_h,r_h)\}_{h=1}^k; s)\notag\\
:=&\begin{cases}  
\big\vert I_s^{n_0+n}(r_p) \setminus \{r_{p+1},\ldots, r_k\} \big\vert ,&\text{if }s=l_p\text{ with }p\in \{1,\ldots,k\}\\ 
\big\vert I_s^{n_0+n} \setminus\{r_{p+1},\ldots, r_k\} \big\vert,&\text{if }l_p<s<l_{p+1} \text{ with }p\in \{0,\ldots,k\}
\end{cases}
\end{align}
It's evident that $c_{I_{n_0+1},\ldots, I_{n_0+n}} (\{(l_h, r_h)\}_{h=1}^{k};\cdot)$ is a function with values in ${\mathbb N}$ defined on the interval $(n_0,n_0+n]$. In this context, we interpret the set $\{r_{p+1},\ldots, r_k\}$ as empty when $p=k$, and additionally, we'll adopt the following conventions for the second case on the right--hand side of \eqref{QuanAlg06b}:

$\bullet$ as $l_{k+1}$ is defined as $n_0+n+1$, when $p=k$, the inequality $l_p<s<l_{p+1}$ simplifies to $l_k<s\le n_0+n$; and so
\[c_{I_{n_0+1},\ldots, I_{n_0+n}} (\{(l_h,r_h)\}_{h=1}^k; s) :=\big\vert I_s^{n_0+n} \setminus\{r_{k+1},\ldots, r_k\} \big\vert=\big\vert I_s^{n_0+n}\big\vert,\quad \forall s>l_k\]

$\bullet$ as $l_0$ is defined as $n_0$, when $p=0$, the inequality $l_p<s<l_{p+1}$ simplifies to $n_0<s<l_1$.

The case $n_0=0$ is particularly important. In this case,
\begin{align}\label{QuanAlg06b2}
[I_1,\ldots,I_n]_k&:=\big\{\{(l_h,r_h)\}_{h=1}^k : 1\le l_1<\ldots <l_k\le n \text{ and }\notag\\ 
&\hspace{0.8cm}r_h\in I_{l_h}^{n}\text{ for all }h\in (0,k]\text{ with }\vert\{r_1,\ldots,r_k\}\vert=k\big\}
\end{align}
and for any $\{(l_h,r_h)\}_{h=1}^k \in[I_1,\ldots,I_n]_k$, the function $c_{I_1,\ldots, I_n}(\{(l_h, r_h)\}_{h=1}^k;\cdot):(0,n]\longmapsto {\mathbb N}$ is defined as
\begin{align} \label{QuanAlg06b3}
&c_{I_1,\ldots, I_n} (\{(l_h,r_h)\}_{h=1}^k; s)\notag\\
:=&\begin{cases}  
\big\vert I_s^n(r_p)\setminus\{r_{p+1},\ldots,r_k\}\big \vert,&\text{if }s=l_p\text{ with }p\in\{1,\ldots,k\}\\ 
\big\vert I_s^n \setminus\{r_{p+1},\ldots, r_k\} \big\vert,&\text{if }l_p<s<l_{p+1} \text{ with }p\in \{0,\ldots,k\}
\end{cases}
\end{align}

\begin{theorem}\label{QuanAlg06}For any $n_0\in{\mathbb N}$, $ n\in{\mathbb N}^*$ and completely ordered sets $I_{n_0+1},\ldots,$ $I_{n_0+n} \in{\mathbb N}_f$, for any ${\bf g}:= \{g_k:\,k\in {\mathbb N}\}\subset{\mathcal H}$ and ${\bf f}:= \{f_k:\,k\in {\mathbb N}\} \subset {\mathcal H}$, following the notations and conventions mentioned above, we have the following normally ordered formula:
\begin{align}\label{QuanAlg06a}
&A(f_{n_0+1})A^+_{\bf g}(I_{n_0+1}) A(f_{n_0+2})A^+_{\bf g} (I_{n_0+2})\ldots A(f_{n_0+n})A^+_{\bf g}(I_{n_0+n})\notag\\
=&\Big(\prod_{s\in(0,n]}\big(q{\bf p}_{m+\vert I_{n_0+1}^{n_0+s-1}\vert-s+1}\big)^{\vert I_{n_0+s}^{n_0+n}\vert}\Big) \cdot A^+_{\bf g} (I_{n_0+1}^{n_0+n})\cdot A_{\bf f}((n_0,n_0+n])\notag\\
+&\sum_{k\in (0,n]}\sum_{\{(l_h,r_h)\}_{h=1}^k\in [I_{n_0+1},\ldots,I_{n_0+n}]_k}\prod_{p\in(0,k]} \langle f_{l_p},g_{r_p} \rangle\notag\\ 
&\ \cdot\Big(\prod_{s\in(0,n]} \big(q{\bf p}_{m+\vert I_{n_0+1}^{n_0+s-1} \vert-s+1} \big) ^{c_{I_{n_0+1},\ldots, I_{n_0+n}}(\{(l_h,r_h)\} _{h=1}^k;s)}\Big)\notag\\ 
& \ \cdot A^+_{\bf g} (I_{n_0+1}^{n_0+n} \setminus \{r_1,\ldots,r_k\}) \cdot A_{\bf f} ((n_0,n_0+n]\setminus\{l_1,\ldots,l_k\})
\end{align}
\end{theorem}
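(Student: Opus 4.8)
The plan is to argue by induction on the number $n$ of elementary blocks $A(f_{n_0+s})A^+_{\bf g}(I_{n_0+s})$, keeping $n_0\in{\mathbb N}$ free; the freedom in $n_0$ is exactly what makes the induction go through, since the inductive hypothesis will be applied to the tail subword, which starts at block $n_0+2$. For the base case $n=1$ the left-hand side is $A(f_{n_0+1})A^+_{\bf g}(I_{n_0+1})$, and this is precisely \eqref{QuanAlg03a0z}. One then checks that the right-hand side of \eqref{QuanAlg06a} collapses to it: for $n=1$ the product over $s\in(0,1]$ carries the single index $m+\vert I_{n_0+1}^{n_0}\vert-1+1=m$ (the union being empty), the leading term reads $(q{\bf p}_m)^{\vert I_{n_0+1}\vert}A^+_{\bf g}(I_{n_0+1})A(f_{n_0+1})$, and the double sum reduces to $k=1$ with $[I_{n_0+1}]_1=\{\{(n_0+1,r_1)\}:r_1\in I_{n_0+1}\}$ and $c(\{(n_0+1,r_1)\};n_0+1)=\vert I_{n_0+1}(r_1)\vert$, reproducing \eqref{QuanAlg03a0z} verbatim.

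For the inductive step, I would factor the word as $A(f_{n_0+1})A^+_{\bf g}(I_{n_0+1})$ times the tail $A(f_{n_0+2})A^+_{\bf g}(I_{n_0+2})\cdots A(f_{n_0+n})A^+_{\bf g}(I_{n_0+n})$, and apply the inductive hypothesis (with parameters $n_0+1$ and $n-1$) to normally order the tail into a sum of terms $P\,A^+_{\bf g}(J)\,A_{\bf f}(L)$, where $J\subseteq I_{n_0+2}^{n_0+n}$, $L\subseteq(n_0+1,n_0+n]$, and $P$ is a block of ${\bf p}$--factors. Then I would (i) commute $P$ to the far left across $A^+_{\bf g}(I_{n_0+1})$ and $A(f_{n_0+1})$ by means of the first relation in \eqref{QuanAlg01d} and the second equality in \eqref{QuanAlg01b} (equivalently \eqref{QuanAlg01f}), which shifts every occurring ${\bf p}$--index by $\vert I_{n_0+1}\vert-1$; (ii) merge $A^+_{\bf g}(I_{n_0+1})A^+_{\bf g}(J)=A^+_{\bf g}(I_{n_0+1}\cup J)$ using complete ordering; and (iii) apply \eqref{QuanAlg03a0z} to $A(f_{n_0+1})A^+_{\bf g}(K)$ with $K:=I_{n_0+1}\cup J=I_{n_0+1}^{n_0+n}\setminus\{\text{creators contracted in the tail}\}$. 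In the leading term of \eqref{QuanAlg03a0z} the surviving $A(f_{n_0+1})$ then joins $A_{\bf f}(L)$ from the left, since $n_0+1<\min L$.

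The concluding step is to recognise the resulting double sum (over the tail pairings $\{(l_h,r_h)\}_{h=2}^k$ and over the contraction index $i$ of the freshly applied \eqref{QuanAlg03a0z}) as the single sum over $[I_{n_0+1},\ldots,I_{n_0+n}]_k$ in \eqref{QuanAlg06a}. The two families split exactly along the case analysis defining \eqref{QuanAlg06b1}: the leading term of \eqref{QuanAlg03a0z} yields the pairings with $l_1>n_0+1$ (the annihilator $A(f_{n_0+1})$ stays, so $n_0+1\notin\{l_1,\ldots,l_k\}$ and $A_{\bf f}((n_0,n_0+n]\setminus\{l_1,\ldots,l_k\})$ appears), whereas the contraction term with $i=r_1$ yields the pairings with $l_1=n_0+1$ and $r_1\in I_{n_0+1}^{n_0+n}\setminus\{r_2,\ldots,r_k\}$, which is precisely the defining constraint of \eqref{QuanAlg06b1}.

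I expect the bulk of the effort, and the main obstacle, to lie in verifying that the exponents of the ${\bf p}$--factors match. Concretely one must show: (a) the shift by $\vert I_{n_0+1}\vert-1$ from step (i), together with the relabelling $s=s'+1$ of tail positions, carries the inductive index $m+\vert I_{n_0+2}^{n_0+s'}\vert-s'+1$ into the claimed $m+\vert I_{n_0+1}^{n_0+s-1}\vert-s+1$; (b) for every position $s\ge n_0+2$ one has $c_{I_{n_0+1},\ldots,I_{n_0+n}}(\{(l_h,r_h)\}_{h=1}^k;s)=c_{I_{n_0+2},\ldots,I_{n_0+n}}(\{(l_h,r_h)\}_{h=2}^k;s)$, which holds because $r_1$, being attached to the smallest annihilator index $l_1=n_0+1$, never enters any removed set $\{r_{p+1},\ldots,r_k\}$ with $p\ge1$ and does not affect the counts $\vert I_s^{n_0+n}(\cdot)\vert$ for $s\ge n_0+2$; and (c) at the front position $s=n_0+1$ the exponent produced by \eqref{QuanAlg03a0z} equals $c_{I_{n_0+1},\ldots,I_{n_0+n}}(\{(l_h,r_h)\}_{h=1}^k;n_0+1)$ read off from \eqref{QuanAlg06b}, using $\vert K\vert=\vert I_{n_0+1}^{n_0+n}\setminus\{r_1,\ldots,r_k\}\vert$ in the uncontracted case and $\vert K(r_1)\vert=\vert I_{n_0+1}^{n_0+n}(r_1)\setminus\{r_2,\ldots,r_k\}\vert$ in the contracted case. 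Assembling these three identities and summing over $k$ then produces \eqref{QuanAlg06a}, completing the induction.
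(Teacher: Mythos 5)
Your plan is correct --- all three exponent identities (a)--(c) that you isolate do hold, and the splitting of $[I_{n_0+1},\ldots,I_{n_0+n}]_k$ according to whether $l_1=n_0+1$ or $l_1>n_0+1$ does reassemble the two families of terms --- but you reach \eqref{QuanAlg06a} by what is essentially the mirror image of the paper's argument. The paper first reduces the theorem to its $n_0=0$ instance (Corollary \ref{QuanAlg06a9}) via the relabelling of Proposition \ref{QuanAlg09}, and then inducts on $n$ by peeling off the \emph{last} block: it applies \eqref{QuanAlg03a0z} to $A(f_{N+1})A^+_{\bf g}(I_{N+1})$ first, pushes the resulting $(q{\bf p}_m)^{\cdot}$ to the far left (producing the shift $m\mapsto m+\vert I_1^N\vert-N$), absorbs the surviving creators of $I_{N+1}$ into the preceding block through the auxiliary families $J_N:=I_N^{N+1}$ and $H_N:=I_N^{N+1}\setminus\{r\}$, and only then invokes the inductive hypothesis on the resulting $N$-block word; its case split is by whether $l_k\le N$ or $l_k=N+1$. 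You peel off the \emph{first} block instead, apply the inductive hypothesis to the tail --- which is where the general $n_0$ earns its keep: in the paper it is a cosmetic relabelling proved separately, whereas for you it is what closes the induction --- and only afterwards contract $A(f_{n_0+1})$ against the aggregated set $K$ of all surviving creators. What each route buys: yours makes the front-position exponents nearly tautological, since $\vert K\vert$ and $\vert K(r_1)\vert$ are literally the two cases of \eqref{QuanAlg06b} at $s<l_1$ (with $p=0$) and $s=l_1$, and your observation that $r_1$ never occurs in a removed set $\{r_{p+1},\ldots,r_k\}$ with $p\ge1$, hence never perturbs $c(\cdot;s)$ for $s>l_1$, is exactly the right reason why the interior exponents are untouched; the paper's route keeps Proposition \ref{QuanAlg03} applied only to a single original block $I_{N+1}$ at the price of the $J/H$ bookkeeping. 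What remains in your write-up is only the routine verification of (a)--(c), all of which you have identified and sketched correctly (in particular the net shift $+\vert I_{n_0+1}\vert-1$ from one annihilator and $\vert I_{n_0+1}\vert$ creators, which under $s=s'+1$ carries $m+\vert I_{n_0+2}^{n_0+s'}\vert-s'+1$ into $m+\vert I_{n_0+1}^{n_0+s-1}\vert-s+1$ as claimed).
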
   \smallskip

\begin{corollary}\label{QuanAlg06a9}
For any $n\in {\mathbb N}^*$ and completely ordered sets $I_1\ldots, I_n\in{\mathbb N}_f$, for any ${\bf g}:= \{g_k:\,k\in {\mathbb N}\}\subset{\mathcal H}$ and ${\bf f}:= \{f_k:\,k\in {\mathbb N}\}\subset{\mathcal H}$, we have the following normally ordered formula:
\begin{align}\label{QuanAlg06ax}
&A(f_1)A^+_{\bf g} (I_1) A(f_2)A^+_{\bf g}(I_2)\ldots A(f_n)A^+_{\bf g}(I_n)\notag\\
=&\Big(\prod_{s\in(0,n]}\big(q{\bf p}_{m+\vert I_{1}^{s-1}\vert-s+1}\big)^{\vert I_{s}^n\vert}\Big) \cdot A^+_{\bf g}(I_1^n)\cdot A_{\bf f}((0,n])\notag\\
+&\sum_{k\in (0,n]}\sum_{\{(l_h,r_h)\}_{h=1}^k\in [I_1,\ldots,I_n]_k}\prod_{h\in(0,k]} \langle f_{l_h}, g_{r_h}\rangle\notag\\ 
&\ \cdot\Big(\prod_{s\in(0,n]} \big(q{\bf p}_{m+\vert I_{1}^{s-1}\vert-s+1}\big)^{c_{I_1,\ldots, I_n}(\{(l_h,r_h)\}_{h=1}^k;s)}\Big)\notag\\ 
&\ \cdot A^+_{\bf g} (I_1^n\setminus \{r_1,\ldots,r_k\})\cdot A_{\bf f} ((0,n]\setminus\{l_1,\ldots,l_k\})
\end{align}
\end{corollary}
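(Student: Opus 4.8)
The plan is to recognize that Corollary \ref{QuanAlg06a9} is exactly the special case $n_0=0$ of Theorem \ref{QuanAlg06}, so the entire task reduces to checking that every symbol appearing in \eqref{QuanAlg06a} specializes correctly to its counterpart in \eqref{QuanAlg06ax} once we set $n_0=0$. First I would substitute $n_0=0$ into the left-hand side of \eqref{QuanAlg06a}: the product $A(f_{n_0+1})A^+_{\bf g}(I_{n_0+1})\cdots A(f_{n_0+n})A^+_{\bf g}(I_{n_0+n})$ becomes precisely $A(f_1)A^+_{\bf g}(I_1)\cdots A(f_n)A^+_{\bf g}(I_n)$, which is the left-hand side of \eqref{QuanAlg06ax}.

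Next I would track each index set and coefficient through the substitution, using only the definitions already recorded in the excerpt. From \eqref{QuanAlg04x}, when $n_0=0$ the aggregated creator sets satisfy $I_{n_0+1}^{n_0+s-1}=I_1^{s-1}$, $I_{n_0+s}^{n_0+n}=I_s^{n}$ and $I_{n_0+1}^{n_0+n}=I_1^{n}$; the half-open interval $(n_0,n_0+n]$ collapses to $(0,n]$; the index family $[I_{n_0+1},\ldots,I_{n_0+n}]_k$ becomes $[I_1,\ldots,I_n]_k$, which is precisely the set defined independently in \eqref{QuanAlg06b2}; and the exponent functional $c_{I_{n_0+1},\ldots,I_{n_0+n}}$ reduces to $c_{I_1,\ldots,I_n}$ of \eqref{QuanAlg06b3}, since the latter was introduced exactly as the $n_0=0$ instance of \eqref{QuanAlg06b}. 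Feeding these identifications into the three summands on the right-hand side of \eqref{QuanAlg06a} reproduces, term by term, the right-hand side of \eqref{QuanAlg06ax}. The only cosmetic point left is the scalar-product factor: \eqref{QuanAlg06a} writes $\prod_{p\in(0,k]}\langle f_{l_p},g_{r_p}\rangle$ while \eqref{QuanAlg06ax} writes $\prod_{h\in(0,k]}\langle f_{l_h},g_{r_h}\rangle$, and these agree after the harmless relabeling $p\mapsto h$ of the bound product index.

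I do not expect any genuine obstacle, as deriving the Corollary from the Theorem is a pure substitution together with a comparison of definitions, requiring no new estimate or computation. It is nonetheless worth stressing why the auxiliary parameter $n_0$ is carried in Theorem \ref{QuanAlg06} at all: a naive induction on the number of blocks $n$ that kept $n_0$ frozen at $0$ would not close, because peeling off the leftmost block $A(f_1)A^+_{\bf g}(I_1)$ via Proposition \ref{QuanAlg03} (in the form \eqref{QuanAlg03a0z}) and then commuting the resulting ${\bf p}$-operators rightward through the remaining word by means of \eqref{QuanAlg01f} leaves a sub-word of the same shape but with all block indices shifted up by one, i.e.\ an $n_0=1$ instance. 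Thus $n_0$ is exactly the bookkeeping device that makes the inductive hypothesis strong enough, and Corollary \ref{QuanAlg06a9} is simply the statement read off at $n_0=0$ once the induction has been carried out.
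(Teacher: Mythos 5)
Your proposal reduces Corollary \ref{QuanAlg06a9} to Theorem \ref{QuanAlg06} by setting $n_0=0$, and that specialization is indeed immediate --- the paper itself records it as the trivial direction of Proposition \ref{QuanAlg09}. But this inverts the logical order of the paper and, more importantly, leaves the entire substance of the statement unproved: you never establish Theorem \ref{QuanAlg06} (nor the Corollary) by any actual computation. The only indication of how the Theorem would be obtained is your closing sketch of an induction that peels off the leftmost block $A(f_1)A^+_{\bf g}(I_1)$ and treats the tail as an $n_0=1$ instance; this is not carried out, and it is also not how the argument is organized in the paper. The paper proves the \emph{Corollary} directly, by induction on $n$ with $n_0$ fixed at $0$ throughout: one applies \eqref{QuanAlg03a0z} to the \emph{rightmost} factor $A(f_{N+1})A^+_{\bf g}(I_{N+1})$, pushes the resulting ${\bf p}$--operators to the left via \eqref{QuanAlg01b} (which shifts their subscripts to $m+\vert I_1^N\vert -N$), and absorbs the surviving creators into the preceding block through the substitutions $J_N:=I_N^{N+1}$ and $H_N:=I_N^{N+1}\setminus\{r\}$ of \eqref{QuanAlg07x1}--\eqref{QuanAlg07x2}, so that the induction hypothesis applies to a word that again starts at index $1$. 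The bulk of the paper's proof is then the combinatorial bookkeeping that matches the four resulting families of terms with the sets $[I_1,\ldots,I_{N+1}]_k$ (split according to $l_k\le N$ versus $l_k=N+1$) and verifies the identities between the exponents $c_{J_1,\ldots,J_N}$, $c_{H_1,\ldots,H_N}$ and $c_{I_1,\ldots,I_{N+1}}$. None of that appears in your proposal.

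Two further points. First, your stated reason for the presence of the parameter $n_0$ --- that an induction frozen at $n_0=0$ ``would not close'' --- is contradicted by the paper: the right-peeling induction closes entirely at $n_0=0$, and the general $n_0$ statement is recovered afterwards by the relabeling $J_h:=I_{n_0+h}$, $f'_h:=f_{n_0+h}$ in Proposition \ref{QuanAlg09}. Second, the left-peeling strategy you sketch is genuinely harder than you suggest: after normal-ordering the tail $A(f_2)A^+_{\bf g}(I_2)\cdots A(f_n)A^+_{\bf g}(I_n)$, you would still have to commute $A(f_1)$ and the creators $A^+_{\bf g}(I_1)$ through the normally ordered tail, generating a second round of contractions that your outline does not address. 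To turn the proposal into a proof you must supply the inductive computation.
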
\smallskip

It's evident that Corollary \ref{QuanAlg06a9} represents a specific case of Theorem \ref{QuanAlg06}. However, the following result establishes their equivalence. Therefore, our focus shifts to proving this Corollary rather than Theorem \ref{QuanAlg06}.

\begin{proposition}\label{QuanAlg09}Theorem \ref{QuanAlg06} and Corollary \ref{QuanAlg06a9} are equivalent.
\end{proposition}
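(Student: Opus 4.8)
The plan is to exploit the fact that Theorem \ref{QuanAlg06} and Corollary \ref{QuanAlg06a9} differ only by a translation of the labelling index $n_0$. One implication is immediate: Corollary \ref{QuanAlg06a9} is exactly the instance $n_0=0$ of Theorem \ref{QuanAlg06}, since then the family $I_{n_0+1},\ldots,I_{n_0+n}$ becomes $I_1,\ldots,I_n$, the block $A_{\bf f}((n_0,n_0+n])$ becomes $A_{\bf f}((0,n])$, the subscripts $m+\vert I_{n_0+1}^{n_0+s-1}\vert-s+1$ become $m+\vert I_1^{s-1}\vert-s+1$, and $c_{I_{n_0+1},\ldots,I_{n_0+n}}$ reduces to $c_{I_1,\ldots,I_n}$ on $(0,n]$. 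Thus the whole proposition reduces to deriving the general formula \eqref{QuanAlg06a} from the special case \eqref{QuanAlg06ax}.

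For this direction I would perform a shift of indices. Given $n_0$ and completely ordered sets $I_{n_0+1},\ldots,I_{n_0+n}$, introduce $\tilde f_s:=f_{n_0+s}$ and $\tilde I_s:=I_{n_0+s}$ for $s\in\{1,\ldots,n\}$, keeping the family ${\bf g}$ unchanged; the sets $\tilde I_1,\ldots,\tilde I_n$ remain completely ordered. The left-hand side of \eqref{QuanAlg06a} is then literally
\[
A(\tilde f_1)A^+_{\bf g}(\tilde I_1)A(\tilde f_2)A^+_{\bf g}(\tilde I_2)\cdots A(\tilde f_n)A^+_{\bf g}(\tilde I_n),
\]
so Corollary \ref{QuanAlg06a9} applies to it verbatim and expands it in terms of the data $\tilde f$, $\tilde I$. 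The task then becomes to rewrite this expansion back in terms of the original $f$, $I$ and to check that it coincides, term by term, with the right-hand side of \eqref{QuanAlg06a}.

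The translation rests on a handful of elementary identities that I would verify in turn. First, $\tilde I_k^h=\cup_{j=k}^h I_{n_0+j}=I_{n_0+k}^{n_0+h}$, so that $\vert\tilde I_1^{s-1}\vert=\vert I_{n_0+1}^{n_0+s-1}\vert$ and $\vert\tilde I_s^n\vert=\vert I_{n_0+s}^{n_0+n}\vert$; this makes the ${\bf p}$-subscripts and the leading exponent agree. Second, the map $\{(l_h,r_h)\}_{h=1}^k\mapsto\{(l_h+n_0,r_h)\}_{h=1}^k$ is a bijection from $[\tilde I_1,\ldots,\tilde I_n]_k$ onto $[I_{n_0+1},\ldots,I_{n_0+n}]_k$ (it preserves the membership $r_h\in\tilde I_{l_h}^n=I_{n_0+l_h}^{n_0+n}$ and the strict ordering of the $l_h$), and under it $\langle\tilde f_{l_h},g_{r_h}\rangle=\langle f_{n_0+l_h},g_{r_h}\rangle$ while the creator and annihilator blocks match. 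Third, and most delicately, one must check $c_{\tilde I_1,\ldots,\tilde I_n}(\{(l_h,r_h)\};s)=c_{I_{n_0+1},\ldots,I_{n_0+n}}(\{(l_h+n_0,r_h)\};s+n_0)$ for $s\in(0,n]$; this is where the bookkeeping is heaviest, because the two-case definition \eqref{QuanAlg06b3} uses the conventions $l_0=0,\,l_{k+1}=n+1$ whereas \eqref{QuanAlg06b} uses $l_0=n_0,\,l_{k+1}=n_0+n+1$, so one has to confirm that the shift $s\mapsto s+n_0$ carries $(0,n]$ onto $(n_0,n_0+n]$ compatibly with the case split $s=l_p$ versus $l_p<s<l_{p+1}$. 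Once these three identities are in place, substituting them into the Corollary's expansion reproduces \eqref{QuanAlg06a} exactly, completing the non-trivial implication and hence establishing the equivalence.
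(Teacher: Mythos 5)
Your proposal is correct and follows essentially the same route as the paper's own proof: the instance $n_0=0$ gives one implication, and the converse is obtained by relabelling $J_h:=I_{n_0+h}$, $f'_h:=f_{n_0+h}$, applying the corollary, and matching the resulting expansion term by term via the shift $l_h\mapsto l_h+n_0$ on the pairing sets and the equality of the $c$--functions. Your explicit attention to the second argument of $c$ (writing $c_{I_{n_0+1},\ldots,I_{n_0+n}}(\cdot;s+n_0)$ so that it lands in the domain $(n_0,n_0+n]$ prescribed by \eqref{QuanAlg06b}) is in fact slightly more careful than the paper's own bookkeeping, but it is the same argument.
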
 
\begin{proof} By setting $n_0:=0$ in Theorem \ref{QuanAlg06}, we can deduce Corollary \ref{QuanAlg06a9}. Once we have established that Corollary, for any $n_0\in{\mathbb N}$, $ n\in{\mathbb N}^*$ and completely ordered sets $I_{n_0+1}\ldots, I_{n_0+n} \in{\mathbb N}_f$, along with the families ${\bf g}:= \{g_k:\,k\in {\mathbb N}\}\subset{\mathcal H}$ and ${\bf f}:= \{f_k:\,k\in {\mathbb N}\}\subset{\mathcal H}$, we introduce the following notations:
\begin{align}\label{QuanAlg09a}
J_h:=I_{n_0+h},\quad f'_h:=f_{n_0+h},\qquad \forall h\in(0,n]
\end{align}
These $J_h$'s are indeed completed ordered with the provided definitions, and we can express the left--hand side of \eqref{QuanAlg06a} as 
\begin{align}\label{QuanAlg069b}
A(f'_1)A^+_{\bf g}(J_1) A(f'_2)A^+_{\bf g} (J_2)\ldots A(f'_n)A^+_{\bf g}(J_n)
\end{align}
which is equal to, as said in Corollary \ref{QuanAlg06a9}, 

\begin{align}\label{QuanAlg09b}
&\Big(\prod_{s\in(0,n]}\big(q{\bf p}_{m+\vert J_{1}^{s-1}\vert-s+1}\big)^{\vert J_{s}^n\vert}\Big) \cdot A^+_{\bf g}(J_1^n)\cdot A_{\bf f'}((0,n])\notag\\
+&\sum_{k\in (0,n]}\sum_{\{(l'_h,r_h)\}_{h=1}^k \in [J_1,\ldots,J_n]_k}\prod_{h\in(0,k]} \langle f'_{l'_h}, g_{r_h}\rangle\notag\\ 
&\ \cdot\Big(\prod_{s\in(0,n]} \big(q{\bf p}_{m+\vert J_{1}^{s-1}\vert-s+1}\big) ^{c_{J_1,\ldots ,J_n}(\{(l'_h,r_h)\}_{h=1}^k;s)}\Big) \notag\\ 
&\ \cdot A^+_{\bf g} (J_1^n\setminus \{r_1,\ldots,r_k\})\cdot A_{\bf f'} ((0,n]\setminus\{l'_1,\ldots,l'_k\})
\end{align}
This expression is identical to the one on the right--hand side of \eqref{QuanAlg06a}. To see this evidence, we introduce $l_h:=l'_h+n_0$ for any $k\in(0,n]$ and $h\in (0,k]$, and conclude the following:

$\bullet$ $A^+_{\bf g}(J_1^n)=A^+_{\bf g}(I_{n_0+1}^ {n_0+n})$,\quad $A_{\bf f'}((0,n])=A_{\bf f}((n_0,n_0+n])$;

$\bullet$ $A^+_{\bf g}(J_1^n\setminus\{r_1, \ldots,r_k\}) =A^+_{\bf g} (I_{n_0+1}^{n_0+n}\setminus \{r_1,\ldots, r_k\})$ and $A_{\bf f'} ((0,n]\setminus\{l'_1,\ldots,l'_k\})=A_{\bf f} ((n_0,n_0+n]\setminus\{l_1,\ldots,l_k\})$';

$\bullet$ $\big(q{\bf p}_{m+\vert J_{1}^{s-1}\vert-s+1} \big)^{\vert J_{s}^n\vert}=\big(q{\bf p}_{m+\vert I_{n_0+1}^{n_0+s-1}\vert-s+1}\big)^{\vert I_{n_0+s}^{n_0+n}\vert}$ for any $s\in(0,n]$;

$\bullet$ $f'_{l'_h}=f_{l_h}$ and $\{(l_h,r_h)\}_{h=1}^k$ runs over $[I_{n_0+1},\ldots, I_{n_0+n}]_k$ as $\{(l'_h,r_h)\}_{h=1}^k$ varying over $[J_1,\ldots,J_n]_k$;

$\bullet$ for any $k,s\in(0,n]$ and for any $\{(l'_h, r_h)\} _{h=1}^k\in [J_1,\ldots,J_n]_k$ (equivalently, $\{(l_h,r_h)\}_{h=1}^k\in[I_{n_0+1},\ldots,I_{n_0+n}]_k $), \[{\bf p}_{m+\vert J_{1}^{s-1}\vert-s+1}={\bf p}_ {m+\vert I_{n_0+1}^{n_0+s-1} \vert-s+1}\]
and
\[c_{J_1,\ldots, J_n}(\{(l'_h,r_h)\}_{h=1}^k;s)=
c_{I_{n_0+1},\ldots, I_{n_0+n}}(\{(l_h,r_h)\}_{h=1}^k;s)
\]                   \end{proof}

\begin{proof} {\bf (of Corollary \ref{QuanAlg06a9})} For $n=1$, the condition $k\in(0, 1]$ means that $k=1$. Consequently, the condition $0 <l_1<\ldots <l_k\le 1$ is nothing else but $l_1=1$. So

$\bullet$ $[I_{1},\ldots,I_{n}]_1=[I_{1}]_1
=\{\{(1,r)\}:r\in I_{1}\}$ and $\{(1,r)\}$ runs over $[I_{1}]_1$ means that $r$ varies over $I_{1}$;

$\bullet$ $c_{I_{1},\ldots, I_{n}}(\{(l_h,r_h)\} _{h=1}^k; s)=c_{I_{1}}(\{(1,r)\}; s):=I_{1}(r)$.\\
Therefore, \eqref{QuanAlg06ax} becomes to \eqref{QuanAlg03a0z}:
\begin{align*}
A(f_{1})A^+_{\bf g} (I_{1}) 
=\big(q{\bf p}_{m}\big)^{\vert I_{1}\vert} A^+_{\bf g}(I_{1})\cdot A(f_{1})+\sum_{i\in I_{1}}\big(q{\bf p}_{m}\big)^{\vert I_{1}(i)\vert}\langle f_{1},g_{i}\rangle A^+_{\bf g} (I_{1}\setminus \{i\})
\end{align*}

Assuming the validity of \eqref{QuanAlg06ax} for $n=N$, let's extend it to $n=N+1$. Starting from \eqref{QuanAlg03a0z}, we obtain
\begin{align*}
A(f_{N+1})A^+_{\bf g} (I_{N+1}) 
=&\big(q{\bf p}_{m}\big)^{\vert I_{N+1}\vert} A^+_{\bf g}(I_{N+1})\cdot A(f_{N+1})\notag\\ 
+&\sum_{r\in I_{N+1}}\big(q{\bf p}_{m}\big)^{\vert I_{N+1}(r)\vert}\langle f_{N+1},g_r\rangle A^+_{\bf g} (I_{N+1}\setminus \{r\})
\end{align*}
and this leads to, in light of \eqref{QuanAlg01b}, 

\begin{align}\label{QuanAlg07f}
&A(f_{1})A^+_{\bf g} (I_{1})\ldots A(f_{N})A^+_{\bf g} (I_{N})A(f_{N+1})A^+_{\bf g} (I_{N+1}) \notag\\ 
=&\big(q{\bf p}_{m+\vert I_1^N\vert -N}\big)^{\vert I_{N+1}\vert} A(f_{1})A^+_{\bf g} (I_{1})\ldots A(f_{N-1})A^+_{\bf g} (I_{N-1}) A(f_{N})A^+_{\bf g} (I_{N}^{N+1}) A(f_{N+1})\notag\\ 
+&\sum_{r\in I_{N+1}}\langle f_{N+1},g_r\rangle
\big(q{\bf p}_{m+\vert I_1^N\vert -N}\big)^{\vert I_{N+1}(r)\vert} \notag\\
&\ A(f_{1})A^+_{\bf g} (I_{1})\ldots
A(f_{N-1})A^+_{\bf g} (I_{N-1}) A(f_{N})A^+_{\bf g} (I_{N}^{N+1}\setminus \{r\})
\end{align}
Now, we will separately analyse the product of creation and annihilation operators which arise in the two terms on the right--hand side of \eqref{QuanAlg07f}.

We introduce the following notations:
\begin{align}\label{QuanAlg07x1}
J_N:=I_{N}^{N+1},\quad &J_k:=I_{k} \text{ and } J_s^t:=\cup_{h=s}^t J_h\notag\\
&\forall 0<s\le t\le N \text{ and }k\in(0,N)
\end{align}
The sets $J_k$'s are obviously completely ordered due to the completely ordered property of the sets $I_k$'s. Furthermore, the induction's assumption yields 
\begin{align}\label{QuanAlg07g}
&A(f_{1})A^+_{\bf g} (I_{1})\ldots A(f_{N-1})A^+_{\bf g} (I_{N-1}) A(f_{N})A^+_{\bf g} (I_{N}^{N+1})A(f_{N+1})\notag\\ 
=&A(f_1)A^+_{\bf g} (J_1) \ldots A(f_N)A^+_{\bf g}(J_N)A(f_{N+1})\notag\\
=&\Big(\prod_{s\in(0,N]}\big(q{\bf p}_{m+\vert J_{1}^{s-1}\vert-s+1}\big)^{\vert J_{s}^N\vert}\Big) \cdot A^+_{\bf g}(J_1^N)\cdot A_{\bf f} ((0,N+1])\notag\\
+&\sum_{k\in (0,N]}\sum_{\{(l_h,r_h)\}_{h=1}^k\in [J_1,\ldots,J_N]_k}\Big(\prod_{s\in(0,N]} \big(q{\bf p}_{m+\vert J_{1}^{s-1}\vert-s+1}\big)^{c_{J_1,\ldots, J_N}(\{(l_h,r_h)\}_{h=1}^k;s)}\Big)\notag\\ 
&\ \cdot\prod_{h\in(0,k]} \langle f_{l_h},g_{r_h}\rangle\cdot A^+_{\bf g} (J_1^N\setminus \{r_1,\ldots,r_k\})\cdot A_{\bf f} ((0,N+1]\setminus\{l_1,\ldots,l_k\})
\end{align}
Likewise, by introducing the following notations for any given $r\in I_{N+1}$:
\begin{align}\label{QuanAlg07x2}
H_N:=I_{N}^{N+1}\setminus\{r\},\quad &H_k:=I_{k}\ 
\text{ and }\ H_s^t:=\cup_{h=s}^t H_h\notag\\
&\forall 0<s\le t\le N \text{ and }k\in(0,N)
\end{align}
the completely ordered property of the sets $I_k$'s ensures that the sets $H_k$'s are also completely ordered. Furthermore, the induction's assumption provides us
\begin{align}\label{QuanAlg07g0}
&A(f_{1})A^+_{\bf g} (I_{1})\ldots A(f_{N-1})A^+_{\bf g} (I_{N-1}) A(f_{N})A^+_{\bf g} (I_{N}^{N+1} \setminus\{r\})\notag\\
=&A(f_1)A^+_{\bf g} (H_1) \ldots A(f_N)A^+_{\bf g}(H_N)\notag\\
=&\Big(\prod_{s\in(0,N]}\big(q{\bf p}_{m+\vert H_{1}^{s-1}\vert-s+1}\big)^{\vert H_{s}^N\vert}\Big) \cdot A^+_{\bf g}(H_1^N)\cdot A_{\bf f}((0,N])\notag\\
+&\sum_{k\in (0,N]}\sum_{\{(l_h,r_h)\}_{h=1}^k\in [H_1,\ldots,H_N]_k}\Big(\prod_{s\in(0,N]} \big(q{\bf p}_{m+\vert H_{1}^{s-1}\vert-s+1}\big)^{c_{H_1,\ldots, H_N}(\{(l_h,r_h)\}_{h=1}^k;s)}\Big)\notag\\ 
&\ \cdot\prod_{h\in(0,k]} \langle f_{l_h},g_{r_h}\rangle\cdot A^+_{\bf g} (H_1^N \setminus \{r_1,\ldots,r_k\})\cdot A_{\bf f} ((0,N]\setminus\{l_1,\ldots,l_k\})
\end{align}
By applying these results, we can express the expression in \eqref{QuanAlg07f} as a sum of four terms:
\begin{align*}
&\big(q{\bf p}_{m+\vert I_1^N\vert -N}\big)^{\vert I_{N+1}\vert} \Big(\prod_{s\in(0,N]}\big(q{\bf p}_{m+\vert J_{1}^{s-1}\vert-s+1}\big)^{\vert J_{s}^N\vert}\Big) \cdot A^+_{\bf g}(J_1^N)\cdot A_{\bf f} ((0,N+1])\notag\\
+&\sum_{k\in (0,N]}\sum_{\{(l_h,r_h)\}_{h=1}^k\in [J_1,\ldots,J_N]_k}\prod_{h\in(0,k]} \langle f_{l_h},g_{r_h}\rangle\cdot\big(q{\bf p}_{m+\vert I_1^N\vert -N}\big)^{\vert I_{N+1}\vert}  \notag\\ 
& \ \cdot \Big(\prod_{s\in(0,N]} \big(q{\bf p}_{m+\vert J_{1}^{s-1}\vert-s+1}\big) ^{c_{ J_1,\ldots,J_N}(\{(l_h,r_h)\}_{h=1}^k;s)}\Big)\notag\\ 
& \ \cdot A^+_{\bf g} (J_1^N\setminus \{r_1,\ldots,r_k\}) \cdot A_{\bf f} ((0,N+1]\setminus\{l_1,\ldots,l_k\})
\end{align*}
\begin{align}\label{QuanAlg07g1}
+&\sum_{r\in I_{N+1}}\langle f_{N+1},g_r\rangle \cdot
\big(q{\bf p}_{m+\vert I_1^N\vert -N}\big)^{\vert I_{N+1}(r)\vert} \notag\\
&\ \cdot\Big(\prod_{s\in(0,N]} \big(q{\bf p}_{m+\vert H_{1}^{s-1}\vert-s+1}\big) ^{\vert H_{s}^N\vert}\Big)\cdot A^+_{\bf g} (H_1^N)\cdot A_{\bf f} ((0,N]) \notag\\
+&\sum_{r\in I_{N+1}}\sum_{k\in(0,N]} \sum_{\{(l_h,r_h)\}_{h=1}^k\in [H_1,\ldots,H_N]_k}
\langle f_{N+1},g_r\rangle \prod_{h\in(0,k]} \langle f_{l_h},g_{r_h}\rangle\notag\\ 
&\ \cdot \big(q{\bf p}_{m+\vert I_1^N\vert -N}\big)^{\vert I_{N+1}(r)\vert}\cdot
\Big(\prod_{s\in(0,N]} \big(q{\bf p}_{m+\vert H_{1}^{s-1}\vert-s+1}\big)^{c_{H_1,\ldots, H_N}(\{(l_h,r_h)\}_{h=1}^k;s)}\Big)\notag\\
&\ \cdot A^+_{\bf g} (H_1^N\setminus \{r_1,\ldots,r_k\})\cdot A_{\bf f} ((0,N]\setminus\{l_1,\ldots,l_k\})
\end{align}

As a first step, following the definition of the $J_k$'s given in \eqref{QuanAlg07x1}, we can identify that in \eqref{QuanAlg07g1}, the term lacking any scalar product of the form $\langle f_l,g_r\rangle$ corresponds to its initial expression, which can be expressed as:
\begin{align}\label{QuanAlg07h1}
\Big(\prod_{s\in(0,N+1]}\big(q{\bf p}_{m+\vert I_{1}^{s-1}\vert-s+1}\big)^{\vert I_{s}^{N+1}\vert}\Big) \cdot A^+_{\bf g}(I_1^{N+1})\cdot A_{\bf f} ((0,N+1])
\end{align}

As a second step, let's examine the terms in \eqref{QuanAlg07g1} that involve exactly one scalar product in the form $\langle f_l,g_r\rangle$, i.e.
\begin{align}\label{QuanAlg07h2}
&\sum_{\{(l_1,r_1)\}\in [J_1,\ldots,J_N]_1}
\langle f_{l_1},g_{r_1}\rangle \big(q{\bf p}_{m+\vert I_1^N\vert -N}\big)^{\vert I_{N+1}\vert}\notag\\ 
&\ \cdot \Big(\prod_{s\in(0,N]} \big(q{\bf p}_{m+\vert J_{1}^{s-1}\vert-s+1}\big) ^{c_{ J_1,\ldots, J_N}(\{(l_1,r_1)\};s)}\Big) \notag\\ 
&\ \cdot A^+_{\bf g} (J_1^N\setminus \{r_1\})\cdot A_{\bf f} ((0,N+1]\setminus\{l_1\})\notag\\
+&\sum_{r\in I_{N+1}} \langle f_{N+1},g_r\rangle \big(q{\bf p}_{m+\vert I_1^N\vert -N}\big)^{\vert I_{N+1}(r)\vert} \notag\\
&\ \cdot\Big(\prod_{s\in(0,N]} \big(q{\bf p}_{m+\vert H_{1}^{s-1}\vert-s+1}\big)^{ \vert H_{s}^N\vert}\Big)\cdot A^+_ {\bf g}(H_1^N)\cdot A_{\bf f}((0,N])
\end{align}
Substituting $n=N+1$ and $k=1$ in \eqref{QuanAlg06b}, we obtain
\[\{0,\ldots,k\}=\{0,1\},\qquad \{r_{p+1},\ldots, r_1\}=\begin{cases}\{r_1\},&\text{if }p=0\\ \emptyset,&\text{if }p=1\end{cases}\]
and this leads to the following:
\begin{align} \label{QuanAlg07h2a}
&c_{I_{1},\ldots, I_{N+1}} (\{(l_1,r_1)\}; s)\notag\\
=&\begin{cases}  
\big\vert I_{l_1}^{N+1}(r_1) \big\vert
,&\text{ if }s=l_1\\ 
\big\vert I_s^{N+1} \setminus\{r_{p+1},\ldots, r_1\} \big\vert,&\text{ if }l_p<s<l_{p+1} \text{ with }p\in \{0,1\}   \end{cases}\notag\\
=&\begin{cases}  
\big\vert I_{l_1}^{N+1}(r_1) \big\vert
,&\text{if }s=l_1\\ 
\big\vert I_s^{N+1} \setminus\{r_1\} \big\vert,
&\text{if }s<l_1\\ 
\big\vert I_s^{N+1} \big\vert,&\text{if }s>l_1
\end{cases}
\end{align}
On the other hand, the expression for $k=1$ in the second term on the right--hand side of \eqref{QuanAlg06ax} is
\begin{align} \label{QuanAlg07h2b}
&\sum_{\{(l,r)\}\in[I_1,\ldots,I_{N+1}]_1} \Big(\prod_{s\in (0,N+1]}\big(q{\bf p}_{m+\vert I_{1}^{s-1}\vert-s+1} \big)^{c_{I_1,\ldots, I_{N+1}}(\{(l,r)\};s)}\Big)\notag\\ 
& \hspace{2.7cm} \cdot \langle f_{l},g_{r}\rangle\cdot A^+_{\bf g} (I_1^{N+1}\setminus \{r\})\cdot A_{\bf f} ((0,N+1]\setminus\{l\})
\end{align} 
Applying the equality:
\[\sum_{\{(l,r)\}\in[I_1,\ldots,I_{N+1}]_1}=\sum_{l\in (0,N+1]}\sum_{r\in I_l^{N+1}}=\sum_{r\in I_{N+1}}+\sum_{l\in (0,N]}\sum_{r\in I_l^{N+1}}
\]
we conclude that the expression \eqref{QuanAlg07h2b} is equal to

\begin{align} \label{QuanAlg07h2c}
&\sum_{r\in I_{N+1}}\Big(\prod_{s\in (0,N+1]} \big(q{\bf p}_{m+\vert I_1^{s-1}\vert-s+1} \big)^ {c_{I_1,\ldots,I_{N+1}}(\{(N+1,r)\};s)}\Big)\notag\\ 
& \hspace{1.2cm}\cdot \langle f_{N+1},g_r\rangle\cdot A^+_{\bf g} (I_1^{N+1}\setminus \{r\})\cdot A_{\bf f} ((0,N]) \notag\\
+&\sum_{l\in (0,N]}\sum_{r\in I_l^{N+1}} \Big(\prod_{s \in(0,N+1]}\big(q{\bf p}_{m+\vert I_{1}^{s-1}\vert-s+1} \big)^{c_{I_1,\ldots,I_{N+1}}(\{ (l,r)\};s)}\Big)\notag\\ 
& \hspace{1.2cm}\cdot \langle f_l,g_r\rangle\cdot A^+_{\bf g} (I_1^{N+1}\setminus \{r\})\cdot A_{\bf f} ((0,N+1]\setminus\{l\})
\end{align} 
According to the definition, we have:
\begin{align*}
H_{s}^N\overset{\eqref{QuanAlg07x2}}=
I_{s}^{N+1}\setminus\{r\},\qquad \forall r\in I_{N+1},\ s\in(0,N]
\end{align*}
and when $l_1=N+1$, the expression in \eqref{QuanAlg07h2a} can be simplified as follows:
\begin{align*}
c_{I_1,\ldots,I_{N+1}} (\{(N+1, r_1)\};s)&=c_{I_1,\ldots, I_{N+1}}(\{(l_1,r_1)\};s)\\
&=\begin{cases}  
\big\vert I_{l}^{N+1}(r) \big\vert
,&\text{ if }s=N+1=l_1\\ 
\big\vert I_s^{N+1} \setminus\{r_1\} \big\vert,
&\text{ if }s\in(0,N]
\end{cases}
\end{align*}
Therefore, in the expression on the right--hand side of \eqref{QuanAlg07h2c}, $A^+_{\bf g} (I_1^{N+1}\setminus \{r\})$ is, in fact, $A^+_{\bf g} (H_1^{N})$. Moreover
\begin{align*} 
&\Big(\prod_{s\in (0,N+1]}\big(q{\bf p}_{m+\vert I_1^{s-1}\vert-s+1} \big)^{c_{I_1,\ldots, I_{N+1}}(\{(N+1,r)\};s)}\Big)\\
=&\big(q{\bf p}_{m+\vert I_1^{s-1}\vert-s+1} \big)^{c_{I_1,\ldots,I_{N+1}}(\{(N+1,r)\};s)}\Big\vert _{s=N+1}\\
&\cdot\Big(\prod_{s\in (0,N]}\big(q{\bf p}_{m+\vert I_1^{s-1}\vert-s+1} \big)^{c_{I_1,\ldots, I_{N+1}} (\{(N+1,r)\};s)}\Big)\\
=&\big(q{\bf p}_{m+\vert I_1^{N}\vert-N} \big)^{\vert I_{N+1}(r)\vert} \Big(\prod_{s\in (0,N]}\big(q{\bf p}_{m+\vert I_1^{s-1}\vert-s+1} \big)^{\big\vert I_s^{N+1} \setminus\{r\} \big\vert}\Big)\\
=&\big(q{\bf p}_{m+\vert I_1^{N}\vert-N} \big)^{\vert I_{N+1}(r)\vert} \Big(\prod_{s\in (0,N]}\big(q{\bf p}_{m+\vert I_1^{s-1}\vert-s+1} \big)^{\big\vert H_s^{N}\big\vert}\Big)
\end{align*} 
Applying this to \eqref{QuanAlg07h2c}, it becomes evident that the first term of the expression \eqref{QuanAlg07h2} is equal to the first term of the expression \eqref{QuanAlg07h2c}. We'll now proceed to demonstrate that this assertion remains valid when ``{\it first}'' is replaced with ``{\it second}''. In other words, we'll prove that
\begin{align}\label{QuanAlg07j}
&\sum_{\{(l_1,r_1)\}\in [J_1,\ldots,J_N]_1}\langle f_{l_1},g_{r_1}\rangle\cdot\big(q{\bf p}_{m+\vert I_1^N \vert -N}\big)^{\vert I_{N+1}\vert} \notag\\ 
& \hspace{2.3cm} \cdot \Big(\prod_{s\in(0,N]} \big(q{\bf p}_{m+\vert J_{1}^{s-1}\vert-s+1}\big) ^{c_{J_1,\ldots, J_N}(\{(l_1,r_1)\};s)}\Big)\notag\\ 
&\hspace{2.3cm}\cdot A^+_{\bf g}(J_1^N\setminus\{r_1\}) \cdot A_{\bf f} ((0,N+1]\setminus\{l_1\})\notag\\ 
=&\sum_{l\in (0,N]}\sum_{r\in I_l^{N+1}} \Big(\prod_{s \in(0,N+1]}\big(q{\bf p}_{m+\vert I_{1}^{s-1} \vert-s+1}\big)^{c_{I_1,\ldots,I_{N+1}}(\{(l,r)\};s)} \Big)\notag\\ 
& \hspace{2.3cm}\cdot \langle f_l,g_r\rangle\cdot A^+_{\bf g} (I_1^{N+1}\setminus \{r\})\cdot A_{\bf f} ((0,N+1]\setminus\{l\})
\end{align} 
In accordance with \eqref{QuanAlg07x1} which gives that $J_1^{N}=I_1^{N+1}$ and $J_1^{s-1}=I_1^{s-1}$ for all $s\in(0,N]$, the expression on the left--hand side of \eqref{QuanAlg07j} is equal to, 
\begin{align*}
&\sum_{l\in (0,N]}\sum_{r\in I_l^{N+1}}
\big(q{\bf p}_{m+\vert I_1^N\vert -N}\big)^{\vert I_{N+1}\vert} \Big(\prod_{s\in(0,N]} \big(q{\bf p}_{m+\vert I_{1}^{s-1}\vert-s+1}\big) ^{c_{J_1,\ldots,J_N} (\{(l,r)\};s)} \Big)\\ 
& \hspace{1.8cm}\cdot\langle f_{l},g_{r}\rangle \cdot A^+_{\bf g} (I_1^{N+1}\setminus \{r\})\cdot A_{\bf f} ((0,N+1]\setminus\{l\})
\end{align*} 
Hence, to confirm \eqref{QuanAlg07j}, it suffices to show that, for any $l\in(0,N]$ and $r\in I_l^{N+1}$, 
\begin{align}\label{QuanAlg07j1}
&\big(q{\bf p}_{m+\vert I_1^N\vert -N}\big)^{\vert I_{N+1}\vert} \Big(\prod_{s\in(0,N]} \big(q{\bf p}_{m+\vert I_{1}^{s-1}\vert-s+1}\big) ^{c_{J_1,\ldots,J_N} (\{(l,r)\};s)} \Big)\notag\\
=&\prod_{s \in(0,N+1]}\big(q{\bf p}_{m+\vert I_{1}^{s-1} \vert-s+1} \big)^{c_{I_1,\ldots, I_{N+1}}(\{(l,r)\};s)}
\end{align} 
By utilizing the fact that $l\le N<N+1$ and the formula \eqref{QuanAlg07h2a}, we can derive the following:
\begin{align*}
&\text{the expression on the right--hand side of \eqref{QuanAlg07j1}}\\
=&\big(q{\bf p}_{m+\vert I_{1}^N \vert-N} \big)^{c_{I_1,\ldots,I_{N+1}}(\{(l,r)\};N+1)}\prod_{s \in(0,N]}\big(q{\bf p}_{m+\vert I_{1}^{s-1} \vert-s+1} \big)^{c_{I_1,\ldots, I_{N+1}}(\{(l,r)\};s)}\\
=&\big(q{\bf p}_{m+\vert I_{1}^N \vert-N} \big)^{\vert I_{N+1}\vert} \prod_{s \in(0,N]}\big(q{\bf p}_{m+\vert I_{1}^{s-1} \vert-s+1} \big)^{c_{I_1,\ldots, I_{N+1}}(\{(l,r)\};s)}
\end{align*}
Moreover, \eqref{QuanAlg07j1} follows (therefore \eqref{QuanAlg07j}) because
\begin{align*}
c_{J_{1},\ldots, J_{N}} (\{(l,r)\}; s)
\overset{\eqref{QuanAlg07h2a}}
=&\begin{cases}  
\big\vert J_{l}^{N}(r) \big\vert
,&\text{if }s=l\\ 
\big\vert J_s^{N} \setminus\{r\} \big\vert,
&\text{if }s<l\\ 
\big\vert J_s^{N} \big\vert,&\text{if }s>l
\end{cases}= \begin{cases}  
\big\vert I_{l}^{N+1}(r) \big\vert
,&\text{if }s=l\\ 
\big\vert I_s^{N+1} \setminus\{r\} \big\vert,
&\text{if }s<l\\ 
\big\vert I_s^{N+1} \big\vert,&\text{if }s>l
\end{cases}\\
\overset{\eqref{QuanAlg07h2a}}=&
c_{I_{1},\ldots, I_{N+1}} (\{(l,r)\}; s)
\end{align*}

As a third step, let's examine the terms in \eqref{QuanAlg07g1} that contain exactly $N+1$ scalar products in the form of $\langle f_l,g_r\rangle$, i.e.
\begin{align}\label{QuanAlg07h0}
&\sum_{r\in I_{N+1}}\sum_{\{(l_h,r_h)\}_{h=1}^N\in [H_1,\ldots,H_N]_N} \langle f_{N+1},g_r\rangle \prod _{h\in(0,N]} \langle f_{l_h},g_{r_h}\rangle \notag\\ 
&\ \cdot \big(q{\bf p}_{m+\vert I_1^N\vert -N}\big)^{\vert I_{N+1}(r)\vert}\cdot
\Big(\prod_{s\in(0,N]} \big(q{\bf p}_{m+\vert H_{1}^{s-1}\vert-s+1}\big)^{c_{H_1,\ldots, H_N}(\{(l_h,r_h)\}_{h=1}^N;s)}\Big)\notag\\
&\ \cdot A^+_{\bf g} (H_1^N\setminus \{r_1,\ldots,r_N\})\cdot A_{\bf f} ((0,N]\setminus\{l_1,\ldots,l_N\})
\end{align}
The strictly increasing property of $l_h$'s implies the following: 

$\bullet$ $l_{N+1}=N+1$ and $\{l_1,\ldots,l_N\}=\{1,\ldots, N\}$ (consequently, $ A_{\bf f} ((0,N]\setminus\{l_1,\ldots,l_N\})=A_{\bf f}(\emptyset)={\bf 1}$ and $\langle f_{N+1},g_r\rangle \prod _{h\in(0,N]} \langle f_{l_h},g_{r_h}\rangle= \prod _{h\in(0,N+1]} \langle f_{h},g_{r_h}\rangle$);

$\bullet$ to perform the summation for all $\{(l_h,r_h)\}_{h=1}^N$ over $[H_1,\ldots,H_N]_N$ means that we sum all $r_p\in H^N_p$ with $p\in(0,n]$ where $\vert\{r_1,\ldots,r_N\}\vert= N$; more specifically, this involves summing all $r_N$ in $H_{N}^N$, all $r_{N-1}$ in $H_{N-1}^N$ while excluding $r_N$, and so forth, until we reach $r_1$ in $H_{1}^N$ excluding all $r_h$ for $1< h\le N$.\\
So the expression \eqref{QuanAlg07h0} can be rephrased, by substituting $r$ with $r_{N+1}$, as follows:
\begin{align}\label{QuanAlg07h}
&\sum_{r_{N+1}\in I_{N+1},r_N\in H_{N}^N,r_{N-1}\in H_{N-1}^N\setminus \{r_N\},\atop \ldots,r_1\in H_{1}^N \setminus\{r_h:1< h\le N\}}\prod_{h\in(0,N+1]} \langle f_{h},g_{r_h}\rangle\cdot \big(q{\bf p}_{m+\vert I_1^N \vert -N}\big)^{\vert I_{N+1}(r_{N+1})\vert}\notag\\ 
&\Big(\prod_{s\in(0,N]} \big(q{\bf p}_{m+\vert H_{1}^{s-1}\vert-s+1}\big)^{c_{H_1,\ldots, H_N}(\{(h,r_h)\}_{h=1}^N;s)}\Big)\cdot A^+_{\bf g} (H_1^N\setminus \{r_1,\ldots,r_N\})
\end{align}
Moreover, \eqref{QuanAlg07x2} says that
\begin{align*}
&H_1^{s-1}=I_1^{s-1} \text{ and } H_s^N=I_s^{N+1}
\setminus\{r_{N+1}\},\qquad \forall s\in(0,N]\\
&H_1^N\setminus\{r_1,\ldots,r_{N}\}=I_1^{N+1}\setminus\{r_1,\ldots,r_{N+1}\}
\end{align*}
Utilizing these results and noting that all intervals $(l_h,l_{h+1})$'s are empty when $l_k=k$ for all $k$, we can deduce
\begin{align*}
&c_{H_1,\ldots,H_N}(\{(h,r_h)\}_{h=1}^N;s) \notag\\
\overset{\eqref{QuanAlg06b3}} =&\big\vert H_s^ N(r_s)\setminus\{r_{s+1},\ldots,r_N\}\big\vert
=\big\vert I_s^{N+1} (r_s)\setminus\{r_{s+1},\ldots,
r_N,r_{N+1}\}\big\vert, \quad\forall s\in(0,N]
\end{align*}
and similarly
\begin{align*}
c_{I_1,\ldots,I_{N+1}} (\{(h,r_h)\}_{h=1}^{N+1};s)
=&\big\vert I_s^{N+1} (r_s)\setminus\{r_{s+1},\ldots,
r_{N+1}\}\big\vert\\
=&\begin{cases}  
\big\vert I_{N+1} (r_{N+1}),&\text{ if }s=N+1\\ 
\big\vert I_s^{N+1} (r_s)\setminus\{r_{s+1},\ldots,
r_{N+1}\}\big\vert,&\text{ if }s\in(0,N]
\end{cases}
\end{align*}
With the help of these results, we obtain
\begin{align}\label{QuanAlg07h3}
&\text{the expression \eqref{QuanAlg07h}}\notag\\
=&\sum_{r_{N+1}\in I_{N+1},r_{N}\in I_{N}^{N+1}\setminus \{r_{N+1}\},\atop \ldots,r_1\in I_{1}^{N+1} \setminus\{r_h:1< h\le N+1\}} \prod_{h \in(0,N+1]} \langle f_{h},g_{r_h}\rangle\notag\\ 
&\ \cdot\Big(\prod_{s\in(0,N+1]} \big(q{\bf p}_{m+\vert I_{1}^{s-1}\vert-s+1}\big)^{c_{I_1,\ldots, I_{N+1}}(\{(h,r_h)\}_{h=1}^{N+1};s)}\Big)\notag\\ 
&\cdot A^+_{\bf g} (I_1^{N+1}\setminus \{r_1,\ldots,r_{N+1}\})
\end{align}

As a fourth step, let's see the terms in \eqref{QuanAlg07g1} that involve $k$'s scalar products of the form $\langle f_l,g_r\rangle$ with $k\in(1,N]$, i.e.
\begin{align}\label{QuanAlg07k}
&\sum_{\{(l_h,r_h)\}_{h=1}^k\in [J_1,\ldots,J_N]_k} \prod_{h\in(0,k]} \langle f_{l_h},g_{r_h}\rangle\cdot\big(q{\bf p}_{m+\vert I_1^N\vert -N}\big)^{\vert I_{N+1}\vert}\notag\\ 
&\ \cdot \Big(\prod_{s\in(0,N]} \big(q{\bf p}_{m+\vert J_{1}^{s-1}\vert-s+1}\big) ^{c_{ J_1, \ldots,J_N}(\{(l_h,r_h)\}_{h=1}^k;s)}\Big) \notag\\ 
& \ \cdot A^+_{\bf g} (J_1^N\setminus \{r_1,\ldots,r_k\}) \cdot A_{\bf f} ((0,N+1]\setminus\{l_1,\ldots,l_k\})\notag\\ 
+&\sum_{r\in I_{N+1}}\sum_{\{(l_h,r_h)\}_{h=1}^{k-1}\in [H_1,\ldots,H_N]_{k-1}}\langle f_{N+1},g_r\rangle \prod _{h\in(0,k-1]}\langle f_{l_h},g_{r_h}\rangle \notag\\
&\ \cdot \big(q{\bf p}_{m+\vert I_1^N\vert -N}\big)^{\vert I_{N+1}(r)\vert}
\cdot \Big(\prod_{s\in(0,N]} \big(q{\bf p}_{m+\vert H_{1}^{s-1}\vert-s+1}\big)^{c_{H_1,\ldots, H_N}(\{(l_h,r_h)\}_{h=1}^{k-1};s)}\Big)\notag\\
&\ \cdot A^+_{\bf g} (H_1^N\setminus \{r_1,\ldots,r_{k-1}\})\cdot A_{\bf f} ((0,N]\setminus\{l_1,\ldots,l_{k-1}\})
\end{align}

Let's begin by examining the first term in \eqref{QuanAlg07k}. The definition of $J_k$'s provided in \eqref{QuanAlg07x1} ensures that
\[
J_s^N\setminus E=I_s^{N+1}\setminus E\ \text{ and }\ 
J_{1}^{s-1}=I_{1}^{s-1}\ \text{ for any }s\in(0,N]
\text{ and }E\subset{\mathbb N}\]
Therefore, to facilitate further analysis, we rewrite both the summation over $\{(l_h,r_h)\} _{h=1}^k$ within $[J_1,\ldots,J_N]_k$ and the expression $c_{ J_1,\ldots, J_N}(\{(l_h,r_h)\}_{h=1}^k;s)$ as follows:
\begin{align*}
&\sum_{\{(l_h,r_h)\}_{h=1}^k\in [J_1,\ldots,J_N]_k}
=\sum_{1\le l_1<\ldots l_k\le N}\sum_{r_k\in J_{l_k} ^N,r_{k-1}\in J_{l_{k-1}}^N\setminus\{r_k\},\atop \ldots,r_1\in J_{l_1}^N\setminus\{r_2,\ldots,r_k\}}\\
=&\sum_{1\le l_1<\ldots l_k\le N}\sum_{r_k\in I_{l_k}^{N+1}, r_{k-1}\in I_{l_{k-1}}^{N+1}\setminus \{r_k\},\atop\ldots, r_1\in I_{l_{1}}^{N+1}\setminus \{r_2,\ldots,r_k\}}=\sum_{\{(l_h,r_h)\}_{h=1}^k\in [I_1,\ldots,I_{N+1}]_k\atop l_k\le N}
\end{align*}
and
\begin{align*}
&c_{ J_1,\ldots, J_N}(\{(l_h,r_h)\}_{h=1}^k;s)\notag\\ \overset{\eqref{QuanAlg06b3}}
=&\begin{cases}  
\big\vert J_s^N(r_p)\setminus\{r_{p+1},\ldots,r_k\}\big \vert,&\text{ if }s=l_p\text{ with }p\in(0,k]\\ 
\big\vert J_s^N \setminus\{r_{p+1},\ldots, r_k\}\big \vert,&\text{ if }l_p<s<l_{p+1}\text{ with }p\in[0,k]
\end{cases}\\
=&\begin{cases}  
\big\vert I_s^{N+1}(r_p)\setminus\{r_{p+1},\ldots,r_k\} \big\vert,&\text{ if }s=l_p\text{ with }p\in (0,k] \\ 
\big\vert I_s^{N+1} \setminus\{r_{p+1},\ldots, r_k\} \big\vert,&\text{ if }l_p<s<l_{p+1} \text{ with }p\in [0,k]
\end{cases}\\
\overset{\eqref{QuanAlg06b3}}=&c_{ I_1,\ldots, I_{N+1}}(\{(l_h,r_h)\}_{h=1}^k;s) 
\end{align*}
where, it is worth noting that the fact $l_k\le N<N+1$ guarantees the following equalities:
\[
c_{ I_1,\ldots, I_{N+1}} (\{(l_h,r_h)\}_{h=1}^k;N+1)=\big\vert I_{N+1}^{N+1} \setminus\{r_{k+1},\ldots, r_k\} \big\vert
=\big\vert I_{N+1}\big\vert
\]
Summing up, we conclude that
\begin{align}\label{QuanAlg07k1}
&\text{the first expression in \eqref{QuanAlg07k}}\notag\\
=&\sum_{\{(l_h,r_h)\}_{h=1}^k\in [I_1,\ldots,I_{N+1}]_k\atop l_k\le N}
\Big(\prod_{s\in(0,N+1]} \big(q{\bf p}_{m+\vert I_{1}^{s-1}\vert-s+1}\big) ^{c_{I_1,\ldots, I_{N+1}} (\{(l_h,r_h)\}_{h=1}^k;s)}\Big) \notag\\ 
&\ \cdot\Big(\prod_{h\in(0,k]} \langle f_{l_h},g_{r_h}\rangle\Big)\cdot A^+_{\bf g} (I_1^{N+1}\setminus \{r_1,\ldots,r_k\}) \cdot A_{\bf f} ((0,N+1]\setminus\{l_1,\ldots,l_k\})
\end{align}

Now let's proceed to demonstrate the following equality:
\begin{align}\label{QuanAlg07n}
&\text{the second expression in \eqref{QuanAlg07k}}\notag\\
=&\sum_{\{(l_h,r_h)\}_{h=1}^k\in[I_1,\ldots,I_{N+1}]_k \atop l_k= N+1}\Big(\prod_{s\in(0,N+1]} \big(q{\bf p}_{m+\vert I_{1}^{s-1}\vert-s+1}\big)^{c_{I_1,\ldots, I_{N+1}} (\{(l_h,r_h)\}_{h=1}^k;s)}\Big) \notag\\ 
&\ \cdot \Big(\prod _{h\in(0,k]} \langle f_{l_h},g_{r_h}\rangle\Big)\cdot A^+_{\bf g} (I_1^{N+1}\setminus \{r_1,\ldots,r_k\}) \cdot A_{\bf f} ((0,N+1]\setminus\{l_1,\ldots,l_k\})
\end{align}

It is obvious that the condition $l_k= N+1$ implies the following equality:
\[(0,N+1]\setminus \{l_1,\ldots,l_k\}=(0,N]\setminus\{l_1,\ldots,l_{k-1}\}\] 
Furthermore, substituting $r$ in the second expression in \eqref{QuanAlg07k} with $r_k$ and using the definition of $H_k$'s as given in \eqref{QuanAlg07x2}, we obtain the following assertions:

1) $H_1^N\setminus\{r_1,\ldots,r_{k-1}\} =
I_1^{N+1}\setminus\{r_1,\ldots,r_k\}$;

2) the double sum $\sum_{r_k\in I_{N+1}}\sum_{\{(l_h,r_h)\}_{h=1}^{k-1}\in [H_1,\ldots,H_N]_{k-1}}$ in \eqref{QuanAlg07k} represents the summation of the following:

$\bullet$ all  combinations $1\le l_1<\ldots<l_{k-1}\le N$,

$\bullet$ all $r_k\in I_{N+1}$, 

$\bullet$ all $r_{k-1}$ in $H_{l_{k-1}}^N$ (equivalently, all $r_{k-1}$  in $I_{l_{k-1}}^{N+1}$ excluding $r_k$), 

$\bullet$ all $r_{k-2}$ in $H_{l_{k-2}}^N\setminus \{r_{k-1}\}$ (equivalently, all $r_{k-2}$ in $I_{l_{k-2}}^{N+1}$ excluding $r_h$ with $h\in\{k-1,k\}$), 

$\ldots,\ldots$, 

$\bullet$ all $r_1$ in $H_{l_1}^N\setminus \{r_2,\ldots,r_{k-1}\}$ (equivalently, all $r_1$ in  $I_{l_1}^{N+1}$ excluding $r_h$ with $h\in\{2,\ldots,k\}$.\\
In other words,
\[\sum_{r_k\in I_{N+1}}\sum_{\{(l_h,r_h)\}_{h=1}^{k-1}
\in[H_1,\ldots,H_N]_{k-1}}=\sum_{\{(l_h,r_h)\}_{h=1}^k
\in [I_1,\ldots,I_{N+1}]_k,\ l_k=N+1}
\]

3) for any $s\in(0,N]$,
\begin{align*}
&c_{H_1,\ldots,H_N}(\{(l_h,r_h)\}_{h=1}^{k-1};s)\notag\\ \overset{\eqref{QuanAlg06b3}}
=&\begin{cases}  
\big\vert H_s^N(r_p)\setminus\{r_{p+1},\ldots,r_{k-1}\} \big \vert,&\text{ if }s=l_p\text{ with }p\in(0,k)\\ 
\big\vert H_s^N\setminus\{r_{p+1},\ldots,r_{k-1}\}\big \vert,&\text{ if }l_p<s<l_{p+1}\text{ with }p\in[0,k)
\end{cases}\\
=&\begin{cases}  
\big\vert I_s^N(r_p)\setminus\{r_{p+1},\ldots, r_{k-1},r_k\} \big \vert,&\text{ if }s=l_p\text{ with }p\in(0,k)\\ 
\big\vert I_s^N \setminus\{r_{p+1},\ldots, r_{k-1}, r_k\}\big \vert,&\text{ if }l_p<s<l_{p+1}\text{ with }p\in[0,k)
\end{cases}\\
\overset{\eqref{QuanAlg06b3}}=&c_{ I_1,\ldots, I_{N+1}}(\{(l_h,r_h)\}_{h=1}^k;s) 
\end{align*}

4) for any $s=N+1=l_k$,
\[c_{ I_1,\ldots, I_{N+1}}(\{(l_h,r_h)\}_{h=1}^k;s) \overset{\eqref{QuanAlg06b3}}=I_{N+1}^{N+1}(r_k)=I_{N+1}(r_k)\]

Summing up, one gets \eqref{QuanAlg07n} and thereby  the proof is completed.             \end{proof}

\bigskip\bigskip

\section{The case of ${\bf p}_k$'s being identity}\label{qm-s4}\bigskip

When all ${\bf p}_k$ are equal to the identity operator, both \eqref{QuanAlg01a} and \eqref{QuanAlg01b} become trivial, while the first $(q,m)$--commutation relation, i.e., the equality in \eqref{QuanAlg01c}, reduces to the standard $q$--commutation relation.
Therefore, the algebra defined in Definition \ref{QuanAlg} corresponds to the standard quon algebra with the {\bf scalar} parameter $q$ falling within the interval $[-1, 1]$.
 
For any $n\in{\mathbb N}^*$ and $\varepsilon\in \{-1,1\}^n$, we denote
\begin{align}\label{QuanAlg10a}
m_\varepsilon&:=\min\big\{\big\vert\varepsilon^{-1} (\{1\})\big\vert,\big\vert\varepsilon^{-1}(\{-1\})\big \vert\big\} \notag\\ 
F_\varepsilon(k)&:=\begin{cases}
\big\{\{(l_h,r_h)\}_{h=1}^k:\, l_1<\ldots<l_k,\,\ \vert\{r_1,\ldots,r_k\}\vert=k \\
\hspace{1.5cm} \varepsilon(l_h)=-1,\, \varepsilon(r_h)=1,\, r_h>l_h,\,\forall h\le k\big\}, &\text{ if } k\in (0,m_\varepsilon]\\
\emptyset\,,&\text{ if } k>m_\varepsilon
\end{cases}
\end{align}
In commonly used terminology, as often seen in references such as \cite{ManSchSev2007}:

$\bullet$ every element in $F_\varepsilon(k)$ is referred to as a {\it Feynman diagram} (related to $\varepsilon$) of the degree $k$;

$\bullet$ every $l_h$ is called a {\it paired ${\mathcal A}$--index}, and every $r_h$ is termed a {\it paired ${\mathcal C}$--index}; 

$\bullet$ every element in the set $\varepsilon^{-1} (\{-1\}) \setminus\{l_h:h\in (0,k]\}$ is named as a {\it unpaired ${\mathcal A}$--index}; 

$\bullet$ every element in the set $\varepsilon^{-1}(\{1\} \setminus\{r_h:h\in (0,k]\}$ is referred to as a {\it unpaired ${\mathcal C}$--index}.

For any Feynman diagram $\gamma:= \{(l_h,r_h)\}_{h=1}^k\in\cup_{n\ge1} F_\varepsilon(n)=\cup_{n=1}^{m_\varepsilon} F_\varepsilon(n)$, the following three specific characterizations of $\gamma$ are crucial for stating and proving the $q$--Wick's theorem, as well as our main result in this section:

$\bullet$ the {\it restricted crossing number} of $\gamma$,
\begin{align}\label{QuanAlg10b1a}
d_1(\gamma):=\begin{cases}\sum_{h=1}^k \big\vert\{p: l_p<l_h <r_p<r_h\} \big\vert,&\text{ if } k\le m_\varepsilon\text{ and }\gamma=\{(l_h,r_h)\}_{h=1}^k\\
0, &\text{ if }k> m_\varepsilon\end{cases}
\end{align}
i.e. the sum, taken over all $h\in(0,k]$, of the cardinality of the set containing all $p\in(0,h]$ such that $r_p$  falls within the interval $(l_h,r_h)$;

$\bullet$ the {\it degenerate crossing number} of $\gamma$,
\begin{align}\label{QuanAlg10b1b}
d_2(\gamma):=\begin{cases}\sum_{h=1}^k \big\vert \big\{(l_h,r_h)\cap \gamma^c\big\}\big\vert,&\text{ if } k\le m_\varepsilon\text{ and }\gamma=\{(l_h,r_h)\}_{h=1}^k\\
0, &\text{ if }k> m_\varepsilon\end{cases}
\end{align}
here, $\gamma^c$ is defined as the {\bf set of all unpaired indices}, i.e., $\gamma^c:=(0,n]\setminus \{l_h,r_h:h\in(0,k]\}$; in other words, the degenerate crossing number of $\gamma$ is the sum, taken over all $h\in(0,k]$, of unpaired indices that locate between $l_h$ and $r_h$;

$\bullet$ the {\it length} of $\gamma$,
\begin{align}\label{QuanAlg10b1c}
d_3(\gamma):=&\sum_{s\in \varepsilon^{-1}(\{-1\}) \setminus\{l_h:h\in (0,k]\}} \big\vert (s,n]\cap \big(\varepsilon^{-1}(\{1\}) \setminus\{r_h:h\in (0,k]\}\big)\big\vert
\end{align}
i.e. the sum, taken over all $l$ in the set of unpaired ${\mathcal A}$--indices, of the cardinality of the set containing all unpaired ${\mathcal C}$--indices situated to the right of $l$.

Wick's theorem for the standard quon algebra, as described in \cite{ManSchSev2007} and related references, can be stated as follows: 

{\it Consider a (pre--)Hilbert space ${\mathcal H}$, and let $G_q({\mathcal H})$ be the *--algebra generated by the creation and annihilation operators defined on the $q$--Fock space over ${\mathcal H}$, denoted as $A^+(f)$ and $A(g)$, where $f,g\in {\mathcal H}$. Then, the following statement holds for any $n\in{\mathbb N}^*$ and $\varepsilon\in\{-1,1\}^n$, and for any $\{g_k:\,k\in{\mathbb N}\}\subset {\mathcal H}$:
\begin{align}\label{QuanAlg10}
&A^{\varepsilon(1)}(g_1)\ldots A^{\varepsilon(n)}(g_n)
\notag\\
=&q^{\sum_{s\in \varepsilon^{-1}(\{-1\}) } \big\vert (s,n]\cap \varepsilon^{-1}(\{1\}) \big\vert }A^+_{\bf g}(\varepsilon^{-1}(\{1\}))A_{\bf g} (\varepsilon^{-1}(\{-1\}))\notag\\
+&\sum_{k\ge1}\sum_{\gamma:=\{(l_h,r_h)\}_{h=1}^k\in F_\varepsilon(k)}q^{\pi_\varepsilon(\gamma)} \prod_{h=1}^k\langle g_{l_h},g_{r_h}\rangle \notag\\
&A^+_{\bf g}(\varepsilon^{-1}(\{1\}) \setminus \{r_1,\ldots,r_k\})A_{\bf g} (\varepsilon^{-1}(\{-1\}) \setminus\{l_1,\ldots,l_k\})
\end{align} 
Hereinafter, for any $k\in{\mathbb N}^*$ and any Feynman diagram $\gamma:=\{(l_h,r_h)\}_{h=1}^k$, we denote  $\pi_\varepsilon(\gamma):= \pi(\{(l_h,r_h)\}_{h=1}^k) := d_1(\gamma)+ d_2(\gamma)+d_3(\gamma)$.}

\begin{remark}\label{remQuanAlg15} 1) The increasing property of $l_h$'s tells us
\begin{align*}
\sum_{h=1}^k\big\vert\{p:l_p<l_h <r_p<r_h\} \big\vert
&=\sum_{h=1}^k\big\vert\{p<h:l_p<l_h <r_p<r_h\} 
\big\vert\\
&=\sum_{h=1}^k\big\vert\{p<h:r_p\in(l_h,r_h)\}\big\vert
\end{align*}

2) For any Feynman diagram $\gamma=\{(l_h,r_h)\} _{h=1}^k$, it is common to use the notation $\pi(\{(l_h,r_h)\}_{h=1}^k):= \pi(\gamma)$.

3) To obtain the normally ordered form of a word like $A^{\varepsilon(1)}(g_1)\ldots A^{\varepsilon(n)} (g_n)$, we can adopt the default convention of setting $\varepsilon(1)=-1$ and $\varepsilon(n)=1$.        

4) For any $\varepsilon\in\{-1,1\}^n$ such that $\varepsilon(1)=-1$ and $\varepsilon(n)=1$, 
there must exist uniquely $N\in(0,n]$ and $\{m_k\}_{k=1}^{N}$ satisfying the following:

$\bullet$ $1=:m_1<m_2<\ldots<m_N< n$;

$\bullet$ by denoting $I_j:=(m_j, m_{j+1}]$ for any $j\in(0,N]$ (here and in what follows, for clarity, $m_{N+1}:=n$), and by using the notations introduced in \eqref{QuanAlg04x},
\begin{align}\label{QuanAlg10c1}
&\varepsilon^{-1}(\{-1\})=\{m_j:j\in(0,N]\},\quad
\varepsilon^{-1}(\{1\})=I_1^N\notag\\
&(m_j,n]=(m_j,m_{N+1}],\quad (m_j,n]\cap \varepsilon^{-1}(\{1\})=I_j^N,\qquad \forall j\in (0,N]
\end{align}

$\bullet$ for any $r\in\varepsilon^{-1}(\{-1\})$ and $l=m_j\in\varepsilon^{-1}(\{-1\})$, the inequality $l<r$
holds if and only if $r\in I_j^N$;

$\bullet$ the sum $\sum_{\gamma:=\{(l_h,r_h)\} _{h=1}^k\in F_\varepsilon(k)}$ can be reformulated as follows:
\begin{align}\label{QuanAlg10c2}
\sum_{\gamma:=\{(l_h,r_h)\}_{h=1}^k\in F_\varepsilon(k)}
&=\sum_{1\le j_1<\ldots<j_k\le N\atop r_h>m_{j_h},\,\varepsilon(r_h)=1,\ \forall h,\, \vert\{r_1,\ldots,r_k\}\vert=k}\notag\\
&=\sum_{1\le j_1<\ldots<j_k\le N, r_k\in I_{j_k}^N \atop r_{k-1}\in I_{j_{k-1}}^N\setminus\{r_k\},\ldots, \,r_1\in I_{j_1}^N\setminus\{r_2,\ldots, r_k\}}
\end{align}           \end{remark}

With the help of these results, we can rewrite the formula \eqref{QuanAlg10} as follows:
\begin{align}\label{QuanAlg10c}
&A^{\varepsilon(1)}(g_1)\ldots A^{\varepsilon(n)}(g_n)\notag\\
=&q^{\sum_{s\in \varepsilon^{-1}(\{-1\}) } \big\vert (s,n]\cap \varepsilon^{-1}(\{1\}) \big\vert }\cdot A^+_{\bf g} (\varepsilon^{-1}(\{1\})\cdot A_{\bf g}(\varepsilon^{-1}(\{-1\})\notag\\
+&\sum_{k\in (0,N]} \sum_{1\le j_1<\ldots<j_k\le N, r_k \in I_{j_k}^N\atop r_{k-1}\in I_{j_{k-1}}^N\setminus \{r_k\},\ldots,\,r_1\in I_{j_1}^N\setminus\{r_2,\ldots, r_k\} }  \prod_{h\in(0,k]} \langle g_{m_{j_h}}, g_{r_h}\rangle \notag\\ 
&\cdot q^{\pi(\{(m_{j_h},r_h)\}_{h=1}^k)}\cdot 
A^+_{\bf g}(\varepsilon^{-1}(\{1\}) \setminus \{r_1,\ldots,r_k\})\notag\\ 
&\cdot A_{\bf g} (\varepsilon^{-1}(\{-1\}) \setminus\{m_{j_1},\ldots,m_{j_k}\})
\end{align}

On the other hand, the formula \eqref{QuanAlg06ax} yields, by taking all ${\bf p}_k$ as the identity,
\begin{align}\label{QuanAlg10d}
&A(g_{m_1})A^+_{\bf g}(I_1)A(g_{m_2})A^+_{\bf g} (I_2)\ldots A(g_{m_N})A^+_{\bf g}(I_N)\notag\\
=&q^{\sum_{j\in(0,N]}\vert I_j^N\vert}\cdot A^+_{\bf g} (I_1^N)\cdot A_{\bf g}(\{m_j:j\in(0,N]\}) \notag\\
+&\sum_{k\in(0,N]}\sum_{1\le j_1<\ldots<j_k\le N, r_k \in I_{j_k}^N\atop r_{k-1}\in I_{j_{k-1}}^N\setminus \{r_k\},\ldots,\,r_1\in I_{j_1}^N\setminus\{r_2,\ldots, r_k\} }  \prod_{h\in(0,k]} \langle g_{m_{j_h}}, g_{r_h}\rangle\notag\\ 
&\cdot q^{c(\{(m_{j_h}+1,r_h)\}_{h=1}^k)}\cdot A^+_{\bf g}(\varepsilon^{-1}(\{1\}) \setminus \{r_1,\ldots,r_k\})\notag\\ 
&\cdot A_{\bf g} (\varepsilon^{-1}(\{-1\}) \setminus\{m_{j_1}, \ldots,m_{j_k}\})
\end{align}
with
\begin{align}\label{QuanAlg10e}
c(\{(m_{j_h},r_h)\}_{h=1}^k):=\sum_{s\in(0,N]}c_{I_1,\ldots, I_N}(\{(l_h,r_h)\}_{h=1}^k;s) \Big\vert_{l_h= m_{j_h},\,\forall h}
\end{align}

Now we are prepared to state and prove the  main result of this section.

\begin{theorem}\label{QuanAlg15} For any $n\in{\mathbb N}^*$ and $\varepsilon\in\{-1,1\}^n$, using the notations introduced earlier for $\{m_j,I_j:j\in(0,N]\}$ in point 4) of Remark \ref{remQuanAlg15}, we can make the following assertion:
For any $k\in(0,N]$ and $1\le j_1<\ldots<j_k\le N$, along with $r_h \in I_{j_h}^N$, where $h\in(0,k]$ such that $\vert \{r_1,\ldots,r_k\}\vert=k$, 
\begin{align}\label{QuanAlg15a0}
\sum_{s\in \varepsilon^{-1}(\{-1\}) } \big\vert (s,n]\cap \varepsilon^{-1}(\{1\}) \big\vert=\sum_{j\in(0,N]}\vert I_j^N\vert
\end{align}
and
\begin{align}\label{QuanAlg15a}
\pi(\{(m_{j_h},r_h)\}_{h=1}^k)=c(\{(m_{j_h},r_h)\}_{h=1}^k)
\end{align}
Consequently, for any $q\in[-1,1]$ and $\{g_k:\,k\in{\mathbb N}\}\subset {\mathcal H}$,
\begin{align}\label{QuanAlg15b}
\text{the expression in \eqref{QuanAlg10c}}\,=\,
\text{the expression in \eqref{QuanAlg10d}}
\end{align}
\end{theorem}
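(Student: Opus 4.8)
The plan is to establish the three claims in the order they are stated, treating \eqref{QuanAlg15a0} and the final reduction \eqref{QuanAlg15b} as bookkeeping and concentrating the real effort on the weight identity \eqref{QuanAlg15a}. The identity \eqref{QuanAlg15a0} is immediate from point 4) of Remark \ref{remQuanAlg15}: since $\varepsilon^{-1}(\{-1\})=\{m_j:j\in(0,N]\}$ and, by \eqref{QuanAlg10c1}, $(m_j,n]\cap\varepsilon^{-1}(\{1\})=I_j^N$ for every $j$, the left-hand side equals $\sum_{j\in(0,N]}\vert I_j^N\vert$, which is the right-hand side. This already matches the $q$-exponents of the two contraction-free leading terms of \eqref{QuanAlg10c} and \eqref{QuanAlg10d}.

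The heart of the argument is \eqref{QuanAlg15a}. I would expand $c(\{(m_{j_h},r_h)\}_{h=1}^k)=\sum_{s\in(0,N]}c_{I_1,\ldots,I_N}(\{(l_h,r_h)\}_{h=1}^k;s)$, with $l_h=j_h$ the block indices, using the two cases of \eqref{QuanAlg06b3}, and read each summand as a \emph{count of creator positions} $t$: the term for $s$ counts the elements of $I_s^N(r_p)$ (when $s=l_p$) or of $I_s^N$ (when $l_p<s<l_{p+1}$), in each case with the points $\{r_{p+1},\ldots,r_k\}$ deleted. This recasts $c(\gamma)$ as a double count over pairs $(s,t)$, where $s\in(0,N]$ is a block index (equivalently the annihilator $m_s$) and $t$ ranges over creator positions. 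I would then split this double count into four cases by whether the block $s$ is paired ($s=j_p$) or unpaired, and whether $t$ is a paired creator ($t=r_q$) or an unpaired one, matching each against a piece of $\pi(\gamma)=d_1(\gamma)+d_2(\gamma)+d_3(\gamma)$.

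Concretely: (i) paired block, paired creator — for $s=j_p$ the surviving paired $t=r_q\in(m_{j_p},r_p)$ forces $q<p$ (the exclusion $\setminus\{r_{p+1},\ldots,r_k\}$ removes $q>p$, and $r_p$ is outside the open interval), yielding exactly the crossings $l_p<r_q<r_p$ with $q<p$, so that the sum over $p$ is $d_1(\gamma)$ by Remark \ref{remQuanAlg15}(1); (ii) paired block, unpaired creator — for $s=j_p$ this counts unpaired creators in $(l_p,r_p)$, giving the unpaired-$\mathcal{C}$ part of $d_2(\gamma)$; (iii) unpaired block, paired creator — for $s$ with $j_p<s<j_{p+1}$ one counts $r_q$ with $q\le p$ and $r_q>m_s$, i.e. $\{q:m_{j_q}<m_s<r_q\}$, and a Fubini-type interchange (summing first over the unpaired annihilator $m_s$, then over the pairs straddling it) identifies this with $\sum_h\vert\{\text{unpaired }\mathcal{A}\text{-indices in }(l_h,r_h)\}\vert$, the unpaired-$\mathcal{A}$ part of $d_2(\gamma)$; (iv) unpaired block, unpaired creator — for unpaired $s$ this counts unpaired creators $t>m_s$, whose sum over unpaired blocks is precisely $d_3(\gamma)$. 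Adding the four contributions gives $c(\gamma)=d_1(\gamma)+d_2(\gamma)+d_3(\gamma)=\pi(\gamma)$. The main obstacle is exactly this bookkeeping: keeping the translation between block indices $j_h\in(0,N]$ and word positions $m_{j_h}$ consistent, tracking the exclusion set $\{r_{p+1},\ldots,r_k\}$ (which selects $q\le p$ for paired creators and is vacuous for unpaired ones), and carrying out the interchange of summation in case (iii).

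Finally, \eqref{QuanAlg15b} follows by comparing \eqref{QuanAlg10c} and \eqref{QuanAlg10d} term by term. The summation ranges coincide, being the common reindexing \eqref{QuanAlg10c2} of the Feynman diagrams, equal to $[I_1,\ldots,I_N]_k$ via \eqref{QuanAlg06b2}; the scalar factors $\prod_{h}\langle g_{m_{j_h}},g_{r_h}\rangle$ are identical; and the creator/annihilator words agree because $\varepsilon^{-1}(\{1\})=I_1^N$ and $\varepsilon^{-1}(\{-1\})=\{m_j:j\in(0,N]\}$ as ordered sets. The contraction-free terms then match by \eqref{QuanAlg15a0} and each contraction term matches by \eqref{QuanAlg15a}, which completes the proof.
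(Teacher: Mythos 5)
Your proposal is correct and follows essentially the same route as the paper: expanding $c(\gamma)=\sum_s c_{I_1,\ldots,I_N}(\gamma;s)$ by cases on whether the block $s$ is paired or unpaired and whether the surviving creator is paired or unpaired reproduces exactly the paper's Steps 1--4, with your case (i) and (ii) matching the paper's identification of the paired-block terms with $d_1(\gamma)$ plus the unpaired-$\mathcal{C}$ part of $d_2(\gamma)$ (via \eqref{QuanAlg12b}), and your Fubini interchange in case (iii) being precisely the key identity \eqref{QuanAlg14f}. The only difference is organizational (a single four-way double count versus the paper's two-stage grouping of $d_1+d_2$ and then $d_3$ plus the remainder), so no further comment is needed.
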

\begin{proof} By comparing the expressions in \eqref{QuanAlg10c} and  \eqref{QuanAlg10d}, we can easily deduce the equality in \eqref{QuanAlg15b} from \eqref{QuanAlg15a0} and \eqref{QuanAlg15a}. Moreover, \eqref{QuanAlg15a0} can be easily obtained by observing that the expression on the left--hand side of \eqref{QuanAlg15a0} is equals to  
\[\sum_{j=1}^N \big\vert (m_j,n]\cap \varepsilon^{-1}(\{1\}) \big\vert\] 
where, $(m_j,n]\cap \varepsilon^{-1}(\{1\})$ is nothing else but $I_j^N$ by definition.
Consequently, the main focus of proving Theorem \ref{QuanAlg15} centres on establishing \eqref{QuanAlg15a} and we'll complete this in several steps.

Our first step is to calculate  $c(\{(m_{j_h},r_h)\} _{h=1}^k)$. By adopting the convention $m_{j_{k+1}}:=n+1$ and noting that
\begin{align*}
\sum_{s\in(0,N]\setminus\{j_1,\ldots,j_p\}}=
\sum_{s<j_1}+\sum_{p=1}^{k-1}\sum_{j_p<s<j_{p+1}}+\sum_{s>j_k}
\end{align*}
and
\[I_s^n \setminus\{r_{p+1},\ldots, r_k\}=I_s^n \text { for any }s>j_k\]
we find that

\begin{align}\label{QuanAlg10f}
&c(\{(m_{j_h},r_h)\}_{h=1}^k)=c(\{(l_h,r_h)\}_{h=1}^k) \Big\vert_{l_h=m_{j_h},\,\forall h}\notag\\
\overset{\eqref{QuanAlg10e}}{:=}& \sum_{s\in(0,N]} c_{I_1,\ldots, I_N}(\{(l_h,r_h)\} _{h=1}^k;s)\Big\vert_{l_h=m_{j_h},\,\forall h}\notag\\
\overset{\eqref{QuanAlg06b3}}{:=}&
\sum_{s\in(0,N]}\begin{cases}  
\big\vert I_s^n(r_p)\setminus\{r_{p+1},\ldots,r_k\}\big \vert,&\text{if }s=m_{j_p}\text{ with }p\in(0,k]\\ 
\big\vert I_s^n \setminus\{r_{p+1},\ldots, r_k\} \big\vert,&\text{if }m_{j_p}<s<m_{j_{p+1}} \text{ with }p\in [0,k]\end{cases}\notag\\
=&\sum_{p=1}^{k}\big\vert I_{j_p}(r_p)^n \setminus \{r_{p+1},\ldots, r_k\} \big\vert+ \sum_{s\in(0,N] \setminus\{j_1,\ldots,j_p\}}\big\vert I_s^n \setminus\{r_{p+1},\ldots, r_k\} \big\vert\notag\\
=&\sum_{h=1}^k\vert I_{j_h} ^N(r_h)\setminus \{r_{h+1}, \ldots,r_k\} \vert+\sum_{s<j_1}\big\vert I_s^N\setminus\{r_1,\ldots, r_k\}\big\vert \notag\\
+&\sum_{p=1}^{k-1}\sum_{s\in(j_p,j_{p+1})} \big\vert I_s^N \setminus\{r_{p+1},\ldots, r_k\} \big\vert+\sum_{s>j_k}\big\vert I_s^N \big\vert 
\end{align}

As the second step, we'll do some preparation for computing $\pi(\{(m_{j_h},r_h)\}_{h=1}^k)$, which is defined as $d_1(\gamma)+d_2(\gamma)+d_3(\gamma)$, where $\gamma= \{(l_h,r_h)\}_{h=1}^k$ with $l_h=m_{j_h}$ for all $h\in(0,k]$.

Let's introduce the following notations:
\begin{align}\label{QuanAlg10b1y}
L_0&:=\emptyset=:R_0\,;\ \, R_h:=\{r_p:p\in(0,h]\} \text{ and } L_h:=\{m_{j_p}:p\in(0,h]\},\ \ \forall h\in(0,k] \notag\\
E^c&:=[1,n]\setminus E=[m_1,n]\setminus E,\qquad \forall E\subset [1,n]
\end{align} 
It is easy to see that
\begin{align}\label{QuanAlg10b1x}
&R_k\cap L_k=\emptyset;\quad \{m_{j_h},r_h:h\in(0,k]\}=
R_k\cup L_k\notag\\
&L_h\subset L_k\subset \varepsilon^{-1}(\{-1\}) \text{ and }R_h\subset R_k\subset\varepsilon^{-1}(\{1\}),
\quad \forall h\in(0,k]\notag\\
&\varepsilon^{-1}(\{1\})\cap L_k^c =\varepsilon^{-1} (\{1\}); \quad \varepsilon^{-1}(\{-1\})\cap R_k^c= \varepsilon^{-1}(\{-1\})\notag\\
&\gamma^c:=[m_1,n]\setminus\{m_{j_h},r_h:h\in(0,k]\}=R_k^c\cap L_k^c\notag\\
&\ \quad=\big(\varepsilon^{-1}(\{1\}) \cap R_k^c\big) \cup \big(\varepsilon^{-1}(\{-1\})\cap L_k^c\big)
\end{align} 
Moreover, $d_i(\gamma)$'s given in 
\eqref{QuanAlg10b1a}, \eqref{QuanAlg10b1b} and \eqref{QuanAlg10b1c} can be rewritten to
\begin{align}\label{QuanAlg10b1z}
d_1(\gamma)&:=\sum_{h=1}^k \big\vert\{p: m_{j_p}<m_{j_h} <r_p<r_h\} \big\vert\notag\\
d_2(\gamma)&:=\sum_{h=1}^k\big\vert(m_{j_h},r_h)\cap \gamma^c\big\vert=\sum_{h=1}^k \big\vert(m_{j_h},r_h) \cap R_k^c\cap L_k^c\big\vert\notag\\
d_3(\gamma)&:=\sum_{s\in\varepsilon^{-1}(\{-1\}) \setminus L_k}\big\vert(s,n]\cap \big(\varepsilon^{-1} (\{1\})\setminus R_k\big) \big\vert
\end{align}
Because of the strictly increasing property of $m_{j_h}$'s (equivalently, $j_h$'s), and by applying the definition of $I_h$'s, $I_h^N$'s and $I_h^N(r)$'s,
we find that for any $h\in(0,k]$,
\begin{align}\label{QuanAlg10b1w}
&R_h\cap I_{j_h}^N(r_h)=R_{h-1}\cap I_{j_h}^N(r_h)\notag\\ 
&\{p: m_{j_p}<m_{j_h} <r_p<r_h\}=\{p<h: r_p\in (m_{j_h},r_h)\}=\{p<h: r_p\in I_{j_h}^N(r_h)\}\notag\\
&(m_{j_h},r_h)\setminus R_k=(m_{j_h},r_h)\cap R_k^c=\big( (m_{j_h},r_h)\cap R_k^c \cap L_k^c\big) \cup \big( (m_{j_h},r_h)\cap R_k^c \cap L_k\big)\notag\\
&(m_{j_h},r_h)\cap\gamma^c\overset{\eqref{QuanAlg10b1x}} =\big((m_{j_h},r_h)\cap \varepsilon^{-1}(\{1\})\cap R_k^c\big)\cup \big((m_{j_h},r_h)\cap \varepsilon^{-1}(\{-1\})\cap L_k^c\big)\notag\\
&\hspace{2.4cm}=\big(I_{j_h}^N(r_h)\setminus R_k\big) \cup\big((m_{j_h},r_h)\cap \varepsilon^{-1}(\{-1\})\cap L_k^c\big)
\end{align}
and consequently,
\begin{align}\label{QuanAlg10b1t}
&\vert\{p: m_{j_p}<m_{j_h} <r_p<r_h\}\vert= \vert\{p<h: r_p\in I_{j_h}^N(r_h)\}\vert
=\vert I_{j_h}^N(r_h)\cap R_h\vert\notag\\
&\vert(m_{j_h},r_h)\cap \gamma^c \vert= \vert I_{j_h}^N(r_h)\setminus  R_k\vert+ \vert(m_{j_h},r_h) \cap \varepsilon^{-1}(\{-1\}) \cap L_k^c\vert
\end{align}

Our third step is to compute the sum $d_1(\gamma)+d_2(\gamma)$. To do this, we make use of the following set manipulation formula: for any set $X$ and its subsets $E,A,B$,
\begin{align}\label{QuanAlg12d}
E\setminus(A\setminus B)&=E\setminus(A\cap B^c)= E\cap (A\cap B^c)^c\notag\\
&= E\cap (A^c\cup B)=\big(E\setminus A\big)\cup\big(E\cap B\big)
\end{align}
here, $D^c:=X\setminus D$ for any $D\subset X$.

For any $h\in(0,k]$, choosing $E:=I_{j_h}^N(r_h)$, $A:=R_k$, and $B:=R_h$ (thus, $\{r_{h+1},\ldots,r_k\}= A\setminus B$), the equality \eqref{QuanAlg12d} confirms that 
\begin{align}\label{QuanAlg12a}
I_{j_h}^N(r_h)\setminus\{r_{h+1},\ldots,r_k\}=
\big(I_{j_h}^N(r_h)\setminus  R_k\big)\cup\big( I_{j_h}^N(r_h)\cap R_h\big)
\end{align}
and consequently
\begin{align}\label{QuanAlg12b}
\vert I_{j_h}^N(r_h)\setminus\{r_{h+1},\ldots,r_k\} \vert= \vert I_{j_h}^N(r_h)\setminus  R_k\vert +\vert I_{j_h}^N(r_h)\cap R_h\vert
\end{align} 
Therefore, 
\begin{align}\label{QuanAlg12c}
&d_1(\gamma)+d_2(\gamma)\notag\\
\overset{\eqref{QuanAlg10b1z}}{:=}&
\sum_{h=1}^k\big\vert\{p: m_{j_p}<m_{j_h} <r_p<r_h\} \big\vert   +\sum_{h=1}^k \big\vert(m_{j_h},r_h)\cap\gamma^c\big\vert \notag\\
\overset{\eqref{QuanAlg10b1t}}=&\sum_{h=1}^k
\Big(\big\vert I_{j_h}^N(r_h)\cap R_h \big\vert+ \big\vert I_{j_h}^N(r_h)\setminus  R_k\big\vert+ \big\vert(m_{j_h},r_h) \cap \varepsilon^{-1}(\{-1\}) \cap L_k^c\big\vert \Big)\notag\\
\overset{\eqref{QuanAlg12b}}=&\sum_{h=1}^k \big\vert I_{j_h}^N(r_h)\setminus\{r_{h+1}, \ldots,r_k\}\big\vert +\sum_{h=1}^k\big\vert(m_{j_h},r_h)\cap \varepsilon^{-1}(\{-1\}) \cap L_k^c\big\vert
\end{align}	

As the fourth step, we'll demonstrate that
\begin{align}\label{QuanAlg14a}
&d_3(\gamma)+\sum_{h=1}^k\vert(m_{j_h},r_h) \cap \varepsilon^{-1}(\{-1\}) \cap L_k^c\vert\notag\\
=&\sum_{s<j_1}\big\vert I_s^N\setminus\{r_1,\ldots, r_k\}\big\vert+\sum_{p=1}^{k-1}\sum_{s\in(j_p,j_{p+1})} \big\vert I_s^N \setminus\{r_{p+1},\ldots, r_k\} \big\vert+\sum_{s>j_k}\big\vert I_s^N \big\vert 
\end{align}  

It is easy to know that the formula \eqref{QuanAlg12d} straightforwardly yields the following analogies of \eqref{QuanAlg12a} and \eqref{QuanAlg12b}: for any $p\in(0,k]$ and $s\in (0,N]$
\begin{align}\label{QuanAlg14b}
&I_s^N\setminus\{r_{p+1},\ldots,r_k\}= I_s^N\setminus(R_k\setminus R_p)\notag\\
=&\big(I_s^N\setminus R_k\big)\cup\big( I_s^N\cap R_h\big)=\big(I_s^N\cap R^c_k\big)\cup\big( I_s^N\cap R_h\big)
\end{align}
and, thanks to the fact $R_h\subset R_k$ or any $h\le k$,
\begin{align}\label{QuanAlg14c}
\vert I_s^N\setminus\{r_{h+1},\ldots,r_k\}\vert= 
\big\vert I_s^N\setminus R_k\big\vert+\big\vert I_s^N\cap R_h\big\vert
\end{align}
Therefore, the expression on the right--hand side of  \eqref{QuanAlg14a} equals to
\begin{align}\label{QuanAlg14d0}
\sum_{s<j_1}\big\vert I_s^N\setminus R_k\big\vert &+\sum_{p=1}^{k-1}\sum_{s\in(j_p,j_{p+1})} \big(\big\vert I_s^N \setminus R_k \big\vert+\big\vert I_s^N \cap R_p \big\vert\big)\notag\\
&+\sum_{s>j_k} \big(\big\vert I_s^N\setminus R_k \big\vert+\big\vert I_s^N\cap R_k \big\vert \big)
\end{align}
Since
\begin{align*}
&\sum_{s<j_1}\big\vert I_s^N\setminus R_k\big\vert +\sum_{p=1}^{k-1}\sum_{s\in(j_p,j_{p+1})} \big\vert I_s^N \setminus R_k \big\vert+\sum_{s>j_k} \big\vert I_s^N\setminus R_k \big\vert \\
=&\sum_{s\in(0,N]\setminus\{j_1,\ldots,j_k\}}\big\vert I_s^N\setminus R_k\big\vert
\end{align*}
the expression \eqref{QuanAlg14d0} is equal to
\begin{align}\label{QuanAlg14d}
\sum_{s\in(0,N]\setminus\{j_1,\ldots,j_k\}}\big\vert I_s^N\setminus R_k\big\vert+\sum_{p=1}^{k-1} \sum_{s\in(j_p,j_{p+1})} \big\vert I_s^N \cap R_p \big\vert+\sum_{s>j_k} \big\vert I_s^N\cap R_k\big\vert
\end{align}

On the other hand, considering any $m_p\in \varepsilon^{-1}(\{-1\}) \setminus L_k$ (equivalently, $p\in (0,N]\setminus\{j_1,\ldots,j_k\}$ and so $(m_p,n]\cap \varepsilon^{-1}(\{1\})=I_p^N$),  we can express $d_3(\gamma)$ as:
\begin{align}\label{QuanAlg14e}
d_3(\gamma)\overset{\eqref{QuanAlg10b1z}}{:=}&\sum_{s\in\varepsilon^{-1}(\{-1\}) \setminus L_k}\big\vert(s,n]\cap \big(\varepsilon^{-1} (\{1\})\setminus R_k\big)\big\vert\notag\\ =&\sum_{s\in(0,N]\setminus\{j_1,\ldots,j_k\} } \big\vert I_s^N\setminus R_k\big\vert
\end{align}
Therefore, the expression \eqref{QuanAlg14d} (i.e., the expression on the right--hand side of \eqref{QuanAlg14a}) can be written as
\[
d_3(\gamma)+\sum_{p=1}^{k-1} \sum_{s\in(j_p,j_{p+1})} \big\vert I_s^N \cap R_p \big\vert+\sum_{s>j_k} \big\vert I_s^N\cap R_k\big\vert
\]
Hence, the equality \eqref{QuanAlg14a} is equivalent to the following equality:
\begin{align}\label{QuanAlg14f}
\sum_{h=1}^k\big\vert(m_{j_h},r_h)\cap
\varepsilon^{-1} (\{-1\})\cap L_k^c\big\vert=\sum_{p=1}^{k-1} \sum_{s\in (j_p,j_{p+1})}\big\vert I_s^N\cap R_p\big\vert +\sum_{s>j_k} \big\vert I_s^N\cap R_k\big\vert
\end{align}

To prove the \eqref{QuanAlg14f}, let's introduce
\begin{align}\label{QuanAlg14g}
x_k&:=\text{the expression on the left--hand side of \eqref{QuanAlg14f}}\notag\\
y_k&:=\text{the expression on the right--hand side of \eqref{QuanAlg14f}}\notag\\
x_{k,h}&:=\vert(m_{j_h},r_h)\cap\varepsilon^{-1}(\{-1\}) \cap L_k^c\vert,\qquad \forall h\in(0,k] \notag\\
y_{k,h}&:=\sum_{s\in (j_h,j_{h+1})}\big\vert I_s^N\cap R_h\big\vert,\qquad \forall h\in(0,k) \notag\\
y_{k,k}&:= \sum_{s>j_k} \big\vert I_s^N\cap R_k\big\vert 
\end{align}
then, it is clear that
\[x_{1,1}=x_1,\quad y_{1,1}=y_1;\qquad x_k=\sum_{h=1}^kx_{k,h},\quad y_k=\sum_{h=1}^ky_{k,h} \]

In the case where $k=1$, since $\big\vert I_l^N\cap \{r_1\}\big\vert=\chi_{(m_l} (r_1)$, where $\chi_{(q} (p):=\begin{cases}
1,&\text{ if }p>1\\ 0,&\text{ if }p\le q
\end{cases}$ for any $p,q\in{\mathbb R}$, we can apply the definitions of $x_1$ and $y_1$ to obtain the following equality:
\begin{align*}
x_1=\vert(m_{j_1},r_1)\cap(\varepsilon^{-1}(\{-1\})
\setminus\{s_1\})\vert=\vert\{s>j_1:m_s<r_1\}\vert=
\sum_{s>j_1} \big\vert I_s^N\cap \{r_1\}\big\vert=y_1
\end{align*}	
Now, let's move on to the case of $k\ge 2$. By taking into account the fact
\begin{align}\label{QuanAlg14j1}
y_{k,h}&:=\sum_{s\in (j_h,j_{h+1})}\big\vert I_s^N\cap \{r_1,\ldots,r_h\}\big\vert\notag\\
&=\sum_{p=1}^h\big\vert\{s\in (j_h,j_{h+1}):m_s<r_p \} \big\vert,\quad \forall h\in(0,k)
\end{align}
and by adopting the convention $l_{k+1}:=N+1$, we have
\begin{align}\label{QuanAlg14j2}
y_{k,k}&:=\sum_{s>j_k}\big\vert I_s^N\cap \{r_1,\ldots,r_k\}\big\vert=\sum_{p=1}^k\big\vert
\{s\in(j_k,j_{k+1}):m_s<r_p\} \big\vert
\end{align}
On the other hand,
\begin{align}\label{QuanAlg14j}
x_{k,h}&:=\vert(m_{j_h},r_h)\cap(\varepsilon^{-1}
(\{-1\}) \setminus\{r_1,\ldots,r_k\} \vert\notag\\
&=\vert\{s>j_h:s\notin\{j_p:p\in(h,k]\}:m_s<r_h\}\vert, \qquad \forall h\in(0,k]
\end{align}
Therefore, by summing over all $ h\in(0,k]$ and combining the formulae \eqref{QuanAlg14j1}, \eqref{QuanAlg14j2} and \eqref{QuanAlg14j}, we can conclude that
\begin{align*}
y_k&=\sum_{h=1}^ky_{k,h}=\sum_{h=1}^{k}\sum_{p=1}^h\big\vert\{s\in (j_h,j_{h+1}):m_s<r_p \} \big\vert\notag\\
& =\sum_{p=1}^{k}\sum_{h=p}^k\big\vert\{s\in (j_h,j_{h+1}):m_s<r_p \} \big\vert =\sum_{p=1}^{k} \big\vert\{s>j_p:m_s<r_p \} \big\vert\notag\\
&=\sum_{p=1}^{k}x_{k,p}=x_k
\end{align*}

Finally, as the last step, by combining \eqref{QuanAlg12c} and \eqref{QuanAlg14a}, we obtain that
\begin{align}\label{QuanAlg16}
&d_1(\gamma)+d_2(\gamma)+d_3(\gamma)\notag\\
=&\sum_{h=1}^k\vert I_{j_h} ^N(r_h)\setminus \{r_{h+1}, \ldots,r_k\} \vert+\sum_{s<j_1}\big\vert I_s^N\setminus\{r_1,\ldots, r_k\}\big\vert \notag\\
+&\sum_{p=1}^{k-1}\sum_{s\in(j_p,j_{p+1})} \big\vert I_s^N \setminus\{r_{p+1},\ldots, r_k\} \big\vert+\sum_{s>j_k}\big\vert I_s^N \big\vert \notag\\
=&\sum_{p=1}^{k}\big\vert I_{j_p}(r_p)^n \setminus\{r_{p+1},\ldots, r_k\} \big\vert+ \sum_{s\in(0,N]\setminus\{j_1,\ldots,j_p\}}\big\vert I_s^n \setminus\{r_{p+1},\ldots, r_k\} \big\vert
\end{align}
and which is nothing else but $c(\{(m_{j_h},r_h)\}_ {h=1}^k)$ in virtue of \eqref{QuanAlg10f}.
\end{proof}
\newpage

\end{document}